\definecolor{darkblue}{rgb}{0.15,0.35,0.55}
\definecolor{reddish}{rgb}{.8, 0.2, 0.2}
\definecolor{plotblue}{RGB}{0,119,187}
\definecolor{plotgreen}{RGB}{0,153,136}
\definecolor{plotorange}{RGB}{238,119,51}
\definecolor{plotmagenta}{RGB}{238,51,119}
\definecolor{plotgray}{RGB}{128,128,128}
\definecolor{plotcyan}{RGB}{51,187,238}
\definecolor{plotred}{RGB}{204,51,17}
\long\def\ca#1\cb{} 
\newcommand{\becs}{\begin{cases}}
	\newcommand{\bem}{\begin{matrix}}
		\newcommand{\dya}[1]{|#1\rangle\langle#1|}
		\newcommand{\encs}{\end{cases}}
	\newcommand{\enm}{\end{matrix}}
\newcommand{\inp}[1]{\langle#1|#1\rangle }
\newcommand{\inpV}[2]{\langle#1,#2\rangle }
\newcommand{\inpd}[2]{\langle#1|#2\rangle }
\newcommand{\ket}[1]{|#1\rangle }
\newcommand{\ot}{\otimes }
\newcommand{\Tr}{{\rm Tr}}
\newcommand{\CC}{{\mathcal C}}
\newcommand{\DC}{{\mathcal D}}
\newcommand{\FC}{{\mathcal F}}
\newcommand{\HC}{{\mathcal H}}
\newcommand{\IC}{{\mathcal I}}
\newcommand{\JC}{{\mathcal J}}
\newcommand{\LC}{{\mathcal L}}
\newcommand{\MC}{{\mathcal M}}
\newcommand{\NC}{{\mathcal N}}
\newcommand{\OC}{{\mathcal O}}
\newcommand{\PC}{{\mathcal P}}
\newcommand{\TC}{{\mathcal T}}
\newcommand{\aB}{\textbf{a}}
\newcommand{\bB}{\textbf{b}}
\newcommand{\dB}{\textbf{d}}
\newcommand{\eB}{\textbf{e}}
\newcommand{\fB}{\textbf{f}}
\newcommand{\mB}{\textbf{m}}
\newcommand{\pB}{\textbf{p}}
\newcommand{\qB}{\textbf{q}}
\newcommand{\uB}{\textbf{u}}
\newcommand{\vB}{\textbf{v}}
\newcommand{\wB}{\textbf{w}}
\newcommand{\xB}{\textbf{x}}
\newcommand{\Rbb}{\mathbb{R}}
\newcommand{\al}{\alpha }
\newcommand{\gm}{\gamma }
\newcommand{\dl}{\delta }
\newcommand{\ep}{\epsilon}
\newcommand{\kp}{\kappa }
\newcommand{\lm}{\lambda }
\newcommand{\Lm}{\Lambda }
\newcommand{\sg}{\sigma }
\theoremstyle{remark}
\newtheorem{lemma}{Lemma}
\newcommand{\One}{\mathbbm{1}} 
\author[1,2]{Vikesh Siddhu}
\author[1]{John Smolin}
\title{Maximum expectation of observables with restricted purity states}
\affil[1]{IBM Quantum, IBM T.J. Watson Research Center, Yorktown Heights, NY, USA}
\affil[2]{IBM Quantum, IBM Research India}
\begin{document}

\maketitle
\begin{abstract}
    Assessment of practical quantum information processing~(QIP) remains
    partial without understanding limits imposed by noise.  Unfortunately, mere
    description of noise grows exponentially with system size, becoming
    cumbersome even for modest sized systems of imminent practical interest.
    We fulfill the need for estimates on performing noisy
    quantum state preparation, verification, and observation.  To do the
    estimation we propose fast numerical algorithms to maximize the expectation
    value of any $d$-dimensional observable over states of bounded purity.
    This bound on purity factors in noise in a measurable way.  Our fastest
    algorithm takes $O(d)$ steps if the eigendecomposition of the observable is
    known, otherwise takes $O(d^3)$ steps at worst.  The algorithms also solve
    maximum likelihood estimation for quantum state tomography with convex and
    even non-convex purity constraints.  Numerics show performance of our key
    sub-routine~(it finds in linear time a probability vector with bounded norm
    that most overlaps with a fixed vector) can be several orders of magnitude
    faster than a common state-of-the-art convex optimization solver.  Our work
    fosters a practical way forward to asses limitations on QIP imposed by
    quantum noise.  Along the way, we also give a simple but fundamental
    insight, noisy systems~(equivalently noisy Hamiltonians) always give higher
    ground-state energy than their noiseless counterparts.
\end{abstract}
\tableofcontents

\section{Introduction}

Quantum state preparation is a basic task in quantum information processing.
One may prepare a resource state to be used prior to or
for the purpose of computation~\cite{BriegelBrowneEA09}, as a vehicle for
carrying error protected quantum information~\cite{Gottesman97}, or as the end
result of a quantum computation~\cite{Shor97}.  State preparation may also be
carried out with the purpose of distributing high fidelity entanglement across
an interconnect~\cite{SiddhuSmolin23a}, a task that can enable distributed
quantum computation. Another basic task in quantum communication and
computation is the verification of a quantum state through quantum
tomography~\cite{LvovskyRaymer09, Teo15}. This method helps benchmark
components of quantum computers and verify key properties such as creation and
distribution of entanglement. Yet another task of interest is measurement of
the ground-state energy of a quantum system. The ground-state energy is not
only useful in many-body physics but also in quantum
chemistry~\cite{McArdleEndoEA20} and quantum
simulation~\cite{GeorgescuAshhabEA14}.

Quantum noise limits our ability to carry out these fundamental tasks of state
preparation, entanglement distribution, tomography, and observation of a
system's ground-state energy. Since noise is essentially inevitable, one must
understand how it limits us from performing these tasks.  Such noise can be
mathematically modelled by a quantum channel, a completely positive trace
preserving map. A lot of work has been done to study properties of quantum
channels~\cite{HolevoGiovannetti12, Wolf12}.  However, the limitations imposed
by these channels on various basic tasks of interest to quantum information
science remain to be precisely understood.  Without this understanding our
assessment of and expectation from quantum information processors remains
partial.

Knowing limitations imposed by a quantum channel on basic tasks in quantum
information is valuable, however applying these limitations in practice
requires a description of the quantum channel involved. Unfortunately, this
description grows exponentially with system size~\cite{MohseniRezakhaniEA08}
and becomes cumbersome even for modest system sizes containing tens or hundreds
of qubits. 
However, it is precisely such modest sized systems which are brewing recent
interest in quantum information processing~\cite{Preskill18} and where it is
key to asses how well these systems can perform some basic task of interest.  
One thus requires a quick, possibly rough estimate on how well a noisy system
may perform under noise.

The first problem solved in this work is to give concise bounds on the fidelity
of preparing a fixed target pure state and also the minimum energy that a
system can have, when emerging from a fixed known quantum channel. Our bound
gives a fundamental insight: noise arising from quantum channels cannot make
the ground-state energy fluctuate below its noiseless value. 

Our second set of results addresses a pertinent case where noise can't be
precisely known. In this case, we provide easily computable estimates on the
maximum fidelity of preparing a target pure state, and the minimum energy a
noisy system can have.  Our bounds take a single measurable
parameter~\cite{BarencoBerthiaumeEA96, BuhrmanCleveEA01a, WeiPritchettEA23},
the purity (more precisely its maximum value), that a noisy system can have.
We also show how to solve both convex and non-convex optimization problems of
reconstructing a quantum state in tomography while incorporating this purity
constraint.  For all these optimizations we provide two efficient
algorithms for maximizing the expectation value of a Hermitian operator over
the set of quantum states with bounded purity. If the eigendecomposition of
this, say $d \times d$, Hermitian operator is known, then our two different algorithms solve
the problem in $O(d^2)$ and $O(d)$ steps, respectively. If the decomposition
isn't known then we incur an additional~(at most $O(d^3)$) cost of obtaining
this decomposition.

The key workhorses for these optimizations are our two different algorithms that both find a
$n$-dimensional probability vector whose norm is bounded and that most overlaps with a
fixed vector. Numerics indicate our recursive algorithm and duality based
algorithm, taking $O(n^2)$ and $O(n)$ steps respectively, can be orders of
magnitude faster than CVXOPT~\cite{SraNowozinEA11, AndersenDahlEA21} for modest $n$.
We also apply our techniques to the problem of one-shot entanglement passing in
the presence of noisy state preparation. Bounds obtained in this case are shown
to be non-multiplicative.

Our work provides fundamental insights on restrictions imposed by quantum
channels on ground energies of Hamiltonians and state-of-the-art numerical
techniques for a variety of tasks in quantum information processing.
Together, the work fosters an exciting direction to incorporate the effect of
noise in one's assessment of quantum systems.

The rest of this paper is organized as follows. This next preliminaries section
begins with basic definitions and notations in Sec.~\ref{sec:qStates}, prior
technical result on one-shot entanglement sharing in Sec.~\ref{sec:priorRes},
and ends with a discussion in Sec.~\ref{sec:opt} of tools from optimization
theory required for various parts of this manuscript.
Maximum fidelity of preparing a pure state and the minimum energy a system can
take, both in the presence of a channel $\NC$, are discussed in
Sec.~\ref{sec:noisyVal}.
The next section, Sec.~\ref{sec:purLim}, formulates these same problems when
the purity of inputs is bounded from above. One-shot entanglement passing in
the purity limited setup is discussed in Sec.~\ref{sec:AdvQ}. Maximum
likelihood tomography with both fixed and upper bounded purity are formulated
in Sec.~\ref{sec:tomo}.
All these problem formulations in Secs.~\ref{sec:purLim}-\ref{sec:tomo}
are reduced from the space of matrices to the space of probability vectors;
these reductions and methods for solving them are outlined in
Sec.~\ref{sec:reformprob}.
These outlines are completely fleshed out in subsequent sections. The first
subsequent section, Sec.~\ref{sec:algSol}, contains various algebraic proofs
for various special instances of the problem at hand.
Using these special instances, a recursive solution for the general case is
given in Sec.~\ref{sec:recursiveAlgo}.
A faster duality theory based algorithm is then given in
Sec.~\ref{sec:dualAlgo}. Finally results from numerical experiments comparing
these algorithms with CVXOPT are in Sec.~\ref{sec:numerics}.
The manuscript comes to a close with a discussion and open problems in
Sec.~\ref{sec:discuss}.

\section{Preliminaries}
\label{sec:prelim}
\subsection{Quantum States and Channels}
\label{sec:qStates}

Let $\HC$ denote a $d$-dimensional Hilbert space and $\ket{\psi} \in \HC$ with
unit norm, $|\ket{\psi}|:= \inp{\psi} = 1$, be a pure state. 
Let $\LC(\HC)$ be the space of linear operators on $\HC$.  Operators $A,B \in
\LC(\HC)$ have Frobenius inner product,
\begin{equation}
    \inpV{A}{B} := \Tr (A^{\dag} B),
    \label{eq:frob}
\end{equation}
where $A^{\dag}$ is the dual~(conjugate transpose) of $A$, and $\Tr$ is trace.
A Hermitian operator $H \in \LC(\HC)$, satisfying $H = H^{\dag}$, is positive
semi-definite, $H \succeq 0$, when all its eigenvalues are non-negative.
Replacing eigenvalues of a positive semi-definite operator $H$ with
their square root results in $\sqrt{H}$.
For any $A \in \LC(H)$, the operator $1$-norm, also called the nuclear norm,
$||A||_1 := \Tr(\sqrt{AA^{\dag}})$, the operator $2$-norm, sometimes called the
Frobenius norm, $||A||_2 := \sqrt{\inpV{A}{A})}$, and the spectral norm,
sometimes called the $\infty$-norm, $||A|| := \underset{|\ket{\psi}| \leq 1}{\max}
|A \ket{\psi}|$. In general~(see Sec.1.1.3
in~\cite{Watrous18}),
\begin{equation}
    ||A||_2 \leq ||A||_1 \leq \sqrt{\text{rank}}(A)||A||_2
    \label{eq:normIneq}
\end{equation}
When $A$ is Hermitian,
\begin{equation}
    \lm_{\max}(A) = \underset{\sg \in S}{\max} \; \Tr(A\sg),
    \label{eq:infNorm}
\end{equation}
where $S:= \{\sg \in \LC(\HC) \; | \; \sg \succeq 0 \; \text{and} \; \Tr(\sg) =
1 \}$ is the set of density operators, and $\lm_{\max}(A)$ represents the
largest eigenvalue of $A$. Note $S$ is a convex set.
Using~\eqref{eq:infNorm}, the smallest eigenvalue of $A$,
\begin{equation}
    \lm_{\min}(A) = -\lm_{\max}(-A) = \underset{\sg \in S}{\min} \; \Tr(A\sg).
    \label{eq:minEig}
\end{equation}
Fidelity between a density operator $\rho$ and a pure state $\ket{\psi}$ is
\begin{equation}
    F(\rho, \psi) = \inpV{\psi}{\rho},
    \label{eq:Fidelity}
\end{equation}
where $\psi: = \dya{\psi}$.  The Fidelity between two density operators $\rho$
and $\sg$, $F(\rho,\sg) := || \sqrt{\rho} \sqrt{\sg} ||_1$.

Let $\HC_a, \HC_b$ and $\HC_r$ have equal dimension $d$ and
$\{\ket{\al_i}\}$ be an orthonormal basis~($\inpd{\al_i}{\al_j} = \dl_{ij}$) 
of a $d$-dimensional Hilbert space.
Let
\begin{equation}
    \ket{\phi}_{ra} = \frac{1}{\sqrt{d}} \sum_i \ket{\al_i}_r \ot \ket{\al_i}_a
    \label{eq:maxEnt}
\end{equation}
be a maximally entangled state on $\HC_{ra}:= \HC_r \ot \HC_a$, where $\ot$
represents tensor product, and the subscripts, $r$ and $a$, denote
the spaces, $\HC_r$ and $\HC_a$, respectively, to which these kets belong.
Let $\IC$ be the identity map taking $\LC(\HC_r)$ to itself. A linear map $\NC : \LC(\HC_a)
\mapsto \LC(\HC_b)$ is completely positive~(CP) if its Choi-Jamio\l{}kowsi
state,
\begin{equation}
    J^{\NC}_{rb} := \IC \ot \NC (\phi_{ra}),
    \label{eq:CJ}
\end{equation}
is positive semi-definite. A CP map can always be written in the Kraus form,
\begin{equation}
    \NC(A) = \sum_i K_i A K_i^{\dag}
    \label{eq:krsForm}
\end{equation}
where $K_i : \HC_a \mapsto \HC_b$ are called Kraus operators.  This
map $\NC$ is trace preserving when $\Tr_b(J^{\NC}_{rb}) = I_r/d$,
alternatively, $\sum_i K_i^{\dag} K_i = I_a$. 
A quantum channel $\NC : \LC(\HC_a) \mapsto \LC(\HC_b)$ is a completely
positive trace preserving~(CPTP) map.
Dual map $\NC^{\dag}: \LC(\HC_b) \mapsto \LC(\HC_a)$ of $\NC$ can be defined via
\begin{equation}
    \Tr\big( \NC^{\dag}(B)A \big) = \Tr \big( B\NC(A) \big).
    \label{eq:dualInp}
\end{equation}
for all $A \in \LC(\HC_a)$ and $B \in \LC(\HC_b)$. The dual map represents
evolutions of observables just as the channel represents evolution of states.
The Choi-Jamio\l{}kowsi state $J^{\NC^{\dag}}_{ra}$ on $\HC_{ra}$ can be converted to
$J^{\NC}_{rb}$ on $\HC_{rb}$ by identically mapping $\HC_a$ to $\HC_b$. This
new operator is related to the Choi-Jamio\l{}kowsi state of $\NC$ as,
\begin{equation}
    J^{\NC}_{ra} = (J^{\NC^{\dag}}_{ra})^*,
\end{equation}
where complex conjugation is done in the $\{ \ket{\al_i} \}$ basis of $\HC_r$
and $\HC_a$.

\subsection{One-shot entanglement sharing}
\label{sec:priorRes}

The aim of one-shot entanglement sharing is to prepare a state $\Lm_{ra}$ such
that sharing sub-system $a$ across $\NC$ results in a state of with maximum
fidelity to a fully entangled state $\phi_{rb}$. This maximum
fidelity~\cite{SiddhuSmolin23a, PalBandyopadhyay18, PalBandyopadhyayEA14,
VerstraeteVerschelde03},
\begin{equation}
    \OC(\NC) = \lm_{\text{max}}(J^{\NC}_{ra})
    \label{th:thOne}
\end{equation}
is achieved when input $\Lm_{ra}$ has support in the eigenspace of
$J^{\NC}_{ra}$ with largest eigenvalue. The maximum fidelity above is
multiplicative~\cite{SiddhuSmolin23a}, i.e., 
\begin{equation}
    \OC(\NC_1 \ot \NC_2) = \OC(\NC_1) \OC(\NC_2).
    \label{eq:mult}
\end{equation}
The property ensures that the maximum fidelity, $\OC(\NC)$, and
the optimum input $\Lm_{ra}$ found in the one-shot setting extend naturally
even when the channel is used multiple times.

\subsection{Optimization Theory}
\label{sec:opt}

Bold face $\xB$ denotes a vector in $n$-dimensional real space,
$\Rbb^n$ and $\xB_i$ denotes its $i^{\text{th}}$ coordinate. Rearranging these
coordinates in decreasing order results in $\xB^{\downarrow}$, a
vector satisfying $\xB^{\downarrow}_1 \geq \xB^{\downarrow}_2 \geq \dots
\geq \xB^{\downarrow}_n$. 
We use $\eB(i)$ to denote a vector whose $i^{\text{th}}$ entry is one and rest
zero, and define $\One := \sum_i \eB(i)$.
For any $\xB \in \Rbb^n$, the $l_1$-norm, $||\xB||_1 := \sum_i |\xB_i|$, and
the $l_2$-norm, $||\xB||_2 := \sqrt{\sum_i \xB_i^2}$, satisfy an inequality
of the type in eq.~\eqref{eq:normIneq},
\begin{equation}
    ||\xB||_2 \leq ||\xB||_1 \leq \sqrt{k} || \xB||_2 \leq \sqrt{n} || \xB||_2,,
    \label{eq:vecNormIneq}
\end{equation}
where $k$ is the number of non-zero entries in $\xB$.
The probability simplex in $\Rbb^n$, 
\begin{equation}
    P := \{ \pB \; | \; \pB \geq 0, \; \text{and} \; \sum_i \pB_i = 1\},
    \label{eq:prbSim}
\end{equation}
where $\pB \geq 0$ is component wise notation to specify every coordinate of
$\pB$ is non-negative. Extreme points of $P$ are $\{\eB(i)\}_{i=1}^n$.
Let 
\begin{equation}
    E_t = \{ \pB \; | \; \pB \cdot \pB \leq t \; \text{and} \; \sum_i \pB_i =
    1\},
    \label{eq:prbSet2}
\end{equation}
be a convex set of vectors with $l_2$ norm at most $\sqrt{t}$ lying in the
$\One \cdot \pB = 1$ plane.

We briefly review some basics of optimization theory~(see Ch. 4 and 5
in~\cite{BoydVandenberghe04}). Consider a convex subset of $\Rbb^n$,
\begin{equation}
    \CC = \{ \xB \; | \; h_i(\xB) \leq 0, \quad \text{and} \quad l_j(\xB) = 0\}
    \label{def:set}
\end{equation}
where $h_i: \CC \mapsto \Rbb$, $1 \leq i \leq p$, are convex functions and
$l_j: \CC \mapsto \Rbb$, $1 \leq j \leq q$, are affine. Let $f: \CC \mapsto
\Rbb$ be a concave function, define an optimization problem
\begin{equation}
    \underset{\xB \in \CC}{\text{maximize}} \; f(\xB),
    \label{def:primal}
\end{equation}
its optimum $f^{*}$ such that $f^{*} \geq f(\xB) \; \forall \xB \in \CC$, and its
optimizer(s) $\xB^* \in \CC$, such that $f^* = f(\xB^*)$. This concave
maximization problem above has an associated Lagrangian,
\begin{equation}
    \LC(\xB, \uB, \wB) := f(\xB) - \sum_i \uB_i h_i(\xB) - \sum_j \wB_j l_j(\xB),
    \label{def:Lagrange}
\end{equation}
where $\uB_i \geq 0$ and $\wB_j \in \Rbb$, and a dual objective function
\begin{equation}
    g(\uB, \wB) = \underset{\xB \in \Rbb^n}{\text{maximize}} \; \LC(\xB, \uB, \wB).
    \label{def:dualObj}
\end{equation}
This dual objective function is a convex function of $\uB, \wB$, even if the
primal objective function $f$ in~\eqref{def:primal} isn't concave~(see
Sec.~5.1.2 in~\cite{BoydVandenberghe04}) in $\xB$. Define a dual problem,
\begin{equation}
    \underset{\uB \geq 0, \wB \in \Rbb}{\text{minimize}} \; g(\uB, \wB),
    \label{def:dual}
\end{equation}
its optimum $g^* \leq g(\uB, \wB)$ for all $\uB \geq 0 \; \& \; \wB \in \Rbb$
and also define optimizers $\uB^*$ and $\wB^*$ if they satisfy $g^* = g(\uB^*, \wB^*)$.
In general, weak duality holds and thus $f^* \leq g^*$~(see Sec.~5.2.2
in~\cite{BoydVandenberghe04}).  However, if there exists $\xB$ s.t.
\begin{align}
    h_i(\xB) <0 \; \forall \; 1 \leq i \leq p \quad \text{and} \quad
    l_j(\xB) =0 \; \forall \; 1 \leq i \leq q.
    \label{eq:strictFeas}
\end{align}
then $f^* = g^*$, and one says {\em strong duality} holds~(see Sec.~5.2.3
in~\cite{BoydVandenberghe04}). 
When strong duality holds, Karush-Kunh-Tucker~(KKT) conditions,
stationarity
\begin{equation}
    \nabla_{\xB} \LC(\xB, \uB, \wB)|_{\xB, \uB, \wB} = 0,
    \label{eq:kkt1}
\end{equation}
complementary slackness,
\begin{equation}
    \uB_i h_i(\xB) = 0 \quad \forall 1 \leq i \leq p
    \label{eq:kkt2}
\end{equation}
primal feasibility,
\begin{equation}
    \xB \in \CC,
    \label{eq:kkt3}
\end{equation}
and dual feasibility,
\begin{equation}
    \uB \geq 0
    \label{eq:kkt4}
\end{equation}
are both necessary and sufficient for $\xB$ to be an optimizer
for~\eqref{def:primal} and $\uB$, $\wB$ to be optimizers
for~\eqref{def:dual}. 

In~\eqref{def:primal} if instead of maximizing $f$, one minimizes $-f$ then the
resulting problem is a convex minimization problem. This convex minimization
problem and the concave maximization problem~\eqref{def:primal} are both
commonly referred to as convex optimization problems.
In general, if $\DC$ is some convex set and $g: \DC \mapsto \Rbb$ is a convex
function we say
\begin{equation}
    \underset{A \in \DC}{\text{minimize}} \; g(A),
\end{equation}
is a convex optimization problem.

\section{Noisy Fidelity and Ground Energy}
\label{sec:noisyVal}

Consider starting with a fixed quantum state in $\HC_a$, applying some noisy
quantum operations, possibly with post-selection, with the aim to prepare a
target pure state $\psi \in \HC_b$ with highest fidelity.  If this preparation
process is fully characterized by a known quantum channel $\NC: \LC(\HC_a)
\mapsto \LC(\HC_b)$, then its highest fidelity of preparing $\psi$,
\begin{equation}
    \underset{\rho \in S_a}{\max} 
    \; F(\NC(\rho), \psi) = \lm_{\max}(\NC^{\dag}(\psi)).
\end{equation}
Proof for this statement can come from using definitions of
fidelity~\eqref{eq:Fidelity}, the dual map~\eqref{eq:dualInp}, and the
formulation of the largest eigenvalue given in~\eqref{eq:infNorm}.

Alternatively, suppose the objective is to prepare a system in a~(possibly
mixed) state with least energy with respect to some fixed Hamiltonian $H$ where
the system emerges from noise described by a quantum channel $\NC$.
This least energy,
\begin{equation}
    \underset{\rho \in S_a}{\min} \; \Tr\big( H \NC(\rho) \big) =
    \lm_{\min}(\NC^{\dag}(H)) =
    \underset{\sg \in S_b^{\NC}}{\min} \; \Tr\big( H \sg \big),
\end{equation}
where $S_a$ is the set of density operators on $\HC_a$ and $S_b^{\NC} = \{ \sg
\; | \sg \in S_b, \;  \sg  = \NC(\rho), \; \rho \in S_a\}$.
The first equality follows from the definition of the dual
channel~\eqref{eq:dualInp} and the formulation of the least eigenvalue
below~\eqref{eq:infNorm}. The second equality simply re-phrases the left side
of the first equality.
Since $S_b^{\NC}$ is a subset of $S_b$ we obtain
\begin{equation}
    \lm_{\min}(\NC^{\dag}(H)) \geq \lm_{\min}(H),
    \label{eq:minEChange}
\end{equation}
i.e., a noisy Hamiltonian, evolved under noise represented by a dual map,
always has a higher ground-state energy than the noiseless Hamiltonian.
Interestingly, the minimum energy with
respect to $H$ is obtained at some mixed state $\NC(\rho)$ at the output while
the minimum energy with respect to $\NC^{\dag}(H)$ can always be obtained at a
pure state at the channel input. These two energies are the same and thus there
is a mixed state at the channel output whose energy with respect to $H$ is the
same as the energy with respect to $\NC^{\dag}(H)$ of a pure state at the
input.

A noisy state preparation with a noiseless Hamiltonian is one way to model
noise when measuring the ground-state energy. There are other ways to model
noise and~\eqref{eq:minEChange} can constrain the results of other such
models. For instance, an alternative is to model noise in such a way that the
state preparation is noiseless but the Hamiltonian deviates from its noiseless
version $H$.  If this deviation is modelled as a perturbation, then the
perturbed Hamiltonian $H(\ep) = H + \ep V$, where $\ep$ is small and $V$ is
Hermitian.  This perturbed Hamiltonian can be used to compute an exact or
approximate new ground-state energy in the presence noise.  Our result in
eq.~\eqref{eq:minEChange} applied to this setup can constrain such a
computation to yield a ground-state energy to be strictly larger than that of
$H$.

\section{Purity Limited Fidelity and Ground Energy}
\label{sec:purLim}

An estimate of the best fidelity of preparing a target state, or the least
energy a system can have in the presence of noise, may be found using results
in Sec.~\eqref{sec:noisyVal} only when the noise channel $\NC: \LC(\HC_a)
\mapsto \LC(\HC_b)$ is known.
However, characterizing a noisy channel $\NC$ is expensive. For instance, when
$a$ and $b$ are $n$ and $m$ qubits systems, respectively, then $\NC$ has
$O(4^{n+m})$ parameters~(here and elsewhere we use the big $O$ notation). An easier
alternative is to work with a single parameter, the maximum purity $\Tr(\rho^2)
\leq t$, of the state being prepared under noise. 
If $t$ is chosen such that the inequality is satisfied for all 
$\rho \in S_b^{\NC}$, then the purity constraint becomes strictly looser than
the $\rho \in S_b^{\NC}$ constraint.
In general, this purity constraint is equivalent to 
\begin{equation}
    H_2(\rho) \geq \log \frac{1}{t},
    \label{eq:renyConst}
\end{equation}
where $H_2(\rho) := -\log \Tr(\rho^2)$ is the 2-R\'{e}nyi entropy of a density
operator. Let 
\begin{equation}
    S^t := \{\rho \; | \; \rho \succeq 0, \; \Tr(\rho) = 1 \; \text{and} \; 
    \Tr(\rho^2) \leq t \},
    \label{eq:CrSet}
\end{equation}
be a subset of $S$ containing density operators with $||\rho||_2 \leq
\sqrt{t}$.  Note $S^t$ is a convex set since $S$ is a convex set and the
Frobenius norm is a convex function of its argument.
Under the constraint $\rho \in S^t$, one formulates a convex optimization
problem to find the maximum fidelity with which one can prepare a target pure
state $\psi$,
\begin{equation}
    \underset{\rho \in S^t}{\max} \; F(\rho, \psi),
    \label{eq:maxFid}
\end{equation}
and the least energy with respect to some Hamiltonian $H$
\begin{equation}
    \underset{\rho \in S^t}{\min} \; \Tr(\rho H).
    \label{opt:maxFPhiPur}
\end{equation}
The first problem~\eqref{eq:maxFid} is a special case of the
second~\eqref{opt:maxFPhiPur}.
In principle, this second convex optimization
problem can be solved directly using well-known solvers, however the complex
semi-definite constraint on $\rho$ can make this optimization slow, specially in high
dimensions. We simplify this optimization problem in Sec.~\ref{sec:reformprob}.
Prior to discussing this simplification we address a case of special interest
where we generalize the discussion in Sec.~\ref{sec:priorRes} and discuss
quantum tomography in the section following it.

\section{Purity limited single-shot entanglement passing}
\label{sec:AdvQ}

Suppose two parties, Alice and Bob, are connected by a known channel $\NC: \LC(\HC_a)
\mapsto \LC(\HC_b)$, where $d_a = d_b = d$. In addition, Alice has access to a
reference system with Hilbert space $\HC_r$ where $d_r = d$.  Noise on
Alice's joint system $\HC_{ra}$ restricts the purity of
states $\rho_{ra}$ that Alice may prepare, $\Tr(\rho_{ra}^2) \leq t$. Given this
restriction, Alice aims to send sub-system $\HC_a$ of state $\rho_{ra}$ across
the channel $\NC$ to obtain a state $\rho_{rb}$ with highest fidelity to a
maximally entangled state.  This maximum fidelity can be found by solving the
following optimization problem:
\begin{align}
    \label{opt:maxFPhiPur-1}
    \begin{aligned}
        \text{maximize} \; & F(\rho_{rb},\phi_{rb}) \\
        \text{subject to} \; & \rho_{rb} = \IC \ot \NC (\rho_{ra}), \\
        \; & \rho_{ra} \in S^t.
    \end{aligned}
\end{align}
The optimum value of this problem is the highest fidelity,
\begin{equation}
    \OC_t(\NC):= F(\rho^*_{rb}, \phi_{rb}),
    \label{def:maxFIDR}
\end{equation}
where $\rho_{ra}^*$ is the optimizer in~\eqref{opt:maxFPhiPur-1}.

Optimizer $\Lm_{ra}$ to~\eqref{opt:maxFPhiPur-1} without the purity constraint
$\Tr (\rho_{ra}^2) \leq t$ is given in~\eqref{th:thOne}.  Using $\Lm_{ra}$
one may guess a solution for~\eqref{opt:maxFPhiPur-1}. One guess is a weighted
sum of $\Lm_{ra}$ with a maximally mixed state.  
While this guess is correct in special cases~(see Sec.~\ref{sec:simplEx}), the
general solution to~\eqref{opt:maxFPhiPur-1} can be different. To find this
solution, one reformulates the problem.
Using the definitions of fidelity~\eqref{eq:Fidelity}, dual
channel~\eqref{eq:dualInp}, and Choi-Jamio\l{}kowsi operator~\eqref{eq:CJ}, the
optimization problem~\eqref{opt:maxFPhiPur-1} can be re-written as
a special case of~\eqref{opt:maxFPhiPur},
\begin{equation}
    \underset{\rho \in S^t}{\text{maximize}} \; 
    \frac{1}{d} \Tr(\rho \JC^{\NC^{\dag}}_{ra}).
    \label{opt:maxFPhiPur-2}
\end{equation}

\subsection{Example Channels}
\label{sec:simplEx}

In certain simple cases, the optimization problem~\eqref{opt:maxFPhiPur-2} can
be solved algebraically~(see Sec.~\ref{sec:basisVec} for details). The first
case is when $\NC = \IC_d$, the identity channel on a $d$-dimensional system.
In this case $\JC_{ra}^{\NC^{\dag}} = \phi_{ra}$,
\begin{equation}
    O_t(\IC_d) = \frac{1}{d^2}(1 + \sqrt{(td^2 - 1)(d^2 - 1)})
\end{equation}
the optimizer,
\begin{equation}
    \rho_{ra}^* = (1 - \lm) \frac{I}{d^2} + \lm \phi^{+}_{ra}
\end{equation}
where $\lm = \sqrt{\frac{td^2 - 1}{d^2 - 1}}$.
A second case is $\NC(\rho) = \TC_d(\rho) := \Tr(\rho) \dya{0}$, sometimes called
the trace channel. In this case,
\begin{equation}
    \JC_{ra}^{\NC^{\dag}} = \dya{0}_{r} \ot \frac{I_a}{d},
\end{equation}
\begin{equation}
    O_t(\TC_d) = 
    \begin{cases}
       \frac{1}{d} & \text{if} \quad \frac{1}{d} \leq t \leq 1 \\
        \frac{1}{d^2}(1 + \sqrt{(td^2-1)(d-1)}) & \text{if} \quad \frac{1}{d^2}
        \leq t \leq \frac{1}{d}
    \end{cases},
        \label{eq:TrSol}
\end{equation}
and the optimizer
\begin{equation}
    \rho_{ra}^* = 
    \begin{cases}
        \dya{0}_r \ot \rho_{a} & \text{if} \quad \frac{1}{d} \leq t \leq 1 \\
        (1-k) \frac{I_{ra}}{d^2} + k \dya{0}_r \ot \frac{I}{d} & \text{if} \quad \frac{1}{d^2}
        \leq t \leq \frac{1}{d}
    \end{cases},
\end{equation}
where $\rho_a$ is any density operator with $\Tr(\rho_a^2) \leq t$ and $k = \lm
\sqrt{d+1} =  \sqrt{\frac{td^2 - 1}{d - 1}}$.
A third case is $\NC = \IC_d \ot \TC_d$, the tensor product of an identity
channel and trace channel, each acting on a $d$-dimensional input. Here 
$\IC: \LC(\HC_{a1}) \mapsto \LC(\HC_{b1})$,
$\TC: \LC(\HC_{a2}) \mapsto \LC(\HC_{b2})$,
$\NC: \LC(\HC_{a}) \mapsto \LC(\HC_{b})$, 
$\HC_a := \HC_{a1} \ot \HC_{a2}$,
$\HC_b := \HC_{b1} \ot \HC_{b2}$,
and
$\HC_r := \HC_{r1} \ot \HC_{r2}$.
In this case
\begin{equation}
    \JC^{\NC^{\dag}}_{ra} = \phi^+_{r1a1} \ot \dya{0}_{r2} \ot \frac{I_{a2}}{d},
\end{equation}
\begin{equation}
    \OC_{r}(\IC_d \ot \NC_d) = 
    \begin{cases}
        1/d & \text{if} \quad \frac{1}{d} \leq t \leq 1\\
        \frac{1}{d^4}\big(1 + \sqrt{(td^4-1)(d^2-1)} \big) & \text{if} \quad
        \frac{1}{d^4} \leq t \leq \frac{1}{d} 
    \end{cases}
\end{equation}
and the optimizer
\begin{equation}
    \rho_{ra} = 
    \begin{cases}
        \phi^+_{r1a1} \ot \dya{0}_{r2} \ot \rho_{a2} & 
        \text{if} \quad \frac{1}{d} \leq t \leq 1 \\
        (1-\gm) \frac{I}{d^4} + \gm \phi^+_{r1a1} \ot \dya{0}_{r2} \ot
        \frac{I_{a2}}{d} & \text{if} \quad \frac{1}{d^4} \leq t \leq
        \frac{1}{d}
    \end{cases}
\end{equation}
where $\gm = \sqrt{\frac{td^4 - 1}{d^3 - 1}}$, and $\rho_{a2}$ is any density
operator with $\Tr(\rho_{a2}^2) \leq t$ when $1/d \leq t \leq 1$. 

\subsection{Multiplicativity}
\label{sec:twoCCase2}

In various practical scenarios, one not only has access to a single use of a
channel but has access to multiple uses of possibly different channels.  The
difference can arise even when one is using a fixed physical channel, but its
characteristics change over time.  It is natural to consider joint uses of these
channels to obtain high fidelity fully entangled states. In
the simplest scenario channels are used multiple times, but no additional
post-processing is done and one assumes noiseless state preparation.
As mentioned in the discussion containing eq.~\eqref{eq:mult}, the maximum
fidelity is multiplicative in this case.

A next simplest case is a limitation on the purity of the state being prepared
prior to transmission. In this case, one is interested in comparing
the maximum fidelity $\OC_t$ for two different channels used
together.
To do this comparison, one is interested in inequalities that relate $\OC_t$
computed across multiple channels with similar computations done for individual
channels.  The simplest inequality of this type, involving two channels $\NC_1
: \LC(\HC_{a1}) \mapsto \LC(\HC_{b1})$ and $\NC_2 : \LC(\HC_{a2}) \mapsto
\LC(\HC_{b2})$ is
\begin{equation}
    \OC_t(\NC_1 \ot \NC_2) \geq \OC_t(\NC_1) \OC_t(\NC_2).
    \label{eq:N1N2}
\end{equation}
It is straightforward to see that the inequality~\eqref{eq:N1N2} holds. Let
\begin{equation}
    \OC_t(\NC_1) = F(\rho_{r1b1}^*, \phi_{r1b1}) \quad \text{and} \quad
    \OC_t(\NC_2) = F(\rho_{r2b2}^*, \phi_{r2b2})
\end{equation}
where $\rho_{r1b1}^*$ and $\rho_{r2b2}^*$ are both valid density operators with
$\Tr\big( (\rho^*_{r1a1})^2 \big) \leq t$ and $\Tr\big( (\rho^*_{r2a2})^2 \big)
\leq t$. Notice $\rho_{ra}:= \rho_{r1a1}^* \ot \rho_{r2a2}^*$ is a valid
density operator with $\Tr(\rho_{ra}^2) \leq t$ and
\begin{equation}
    \OC_t(\NC_1 \ot \NC_2) \geq F(\rho_{rb}, \phi^+_{rb}) =  \OC_t(\NC_1) \OC_t(\NC_2)
    \label{eq:ineqt}
\end{equation}
where the final equality makes use of the fact that $\phi^+_{ab} =
\phi^+_{a1b1} \ot \phi^+_{a2b2}$. 
We find that the inequality~\eqref{eq:ineqt} is strict for the simplest of
channels. For instance, let $\NC_1 = \NC_2 = \IC_d$, an identity channel acting
on a $d$-dimensional Hilbert space. Then, using the results in
Sec.~\eqref{sec:simplEx} one finds that
\begin{equation}
    \OC_t(\IC_d \ot \IC_d) = \frac{1}{d^4}(1 + \sqrt{(td^4 - 1)(d^4 - 1)})
    \quad \text{and} \quad
    \OC_t(\IC_d)^2 = \frac{1}{d^4}\big(1 + \sqrt{(td^4-1)(d^4-1)}\big)^2
\end{equation}
such that inequality
\begin{equation}
    \OC_t(\IC_d \ot \IC_d) \geq \OC_t(\IC_d) \OC_t(\IC_d)
\end{equation}
at $d=2$ is strict for all $1/d^2 \leq t < 1$ and becomes an equality at
$t=1$~(see Fig.~\ref{fig:Id}).

\begin{figure}[ht]
    \centering
    \begin{subfigure}[b]{0.4\textwidth}
         \centering
         \includegraphics[width=0.8\textwidth]{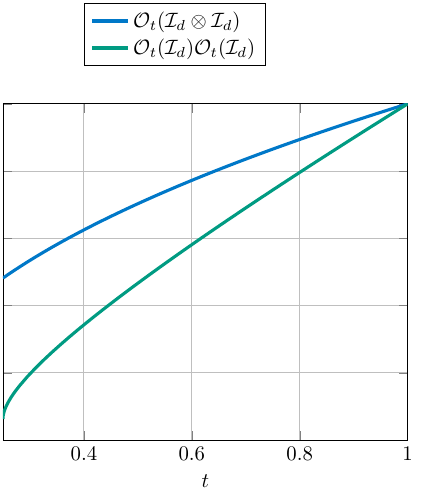}
         \caption{Plot of $\OC_t$ for two joint channel uses and two separate
         uses of the same identity channel.}
     \end{subfigure}
     \hspace{1cm}
    \begin{subfigure}[b]{0.4\textwidth}
         \centering
         \includegraphics[width=0.8\textwidth]{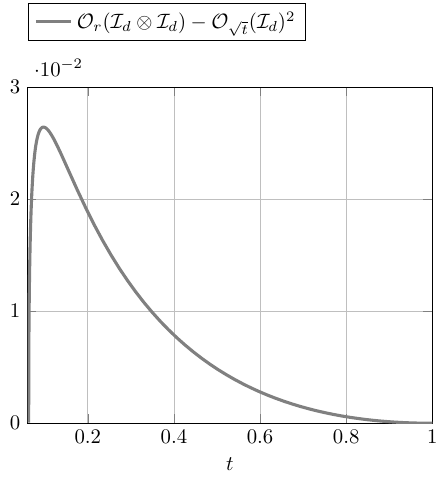}
         \caption{Plot comparing $\OC_t$ over two channel uses with
         $\OC_{\sqrt{t}}$ over a single use of each channel.}
     \end{subfigure}
    \caption{Non-multiplicative behavior with the identity channel $\IC_d$ with
    $d=2$.}
    \label{fig:Id}
\end{figure}

Similarly, if $\NC_1 = \NC_2 = \Tr$, the trace channel. Then using results
from Sec.~\ref{sec:simplEx} we find that the inequality,
\begin{equation}
    \OC_t(\TC_d \ot \TC_d) \geq \OC_t(\TC_d) \OC_t(\TC_d),
\end{equation}
at $d=2$ is strict when $1/d^2 \leq t < 1/d$ and becomes an equality at $1/d
\leq t \leq 1$~(see Fig.~\ref{fig:Tr}). Finally, we consider the inequality
\begin{equation}
    \OC_t(\IC_d \ot \TC_d) \geq \OC_t(\IC_d) \OC_t(\TC_d).
\end{equation}
At $d=2$ this inequality is strict for all $1/d^2 \leq t < 1$~(see
Fig.~\ref{fig:IdTr}). 

\begin{figure}[ht]
    \centering
    \begin{subfigure}[b]{0.4\textwidth}
         \centering
         \includegraphics[width=0.8\textwidth]{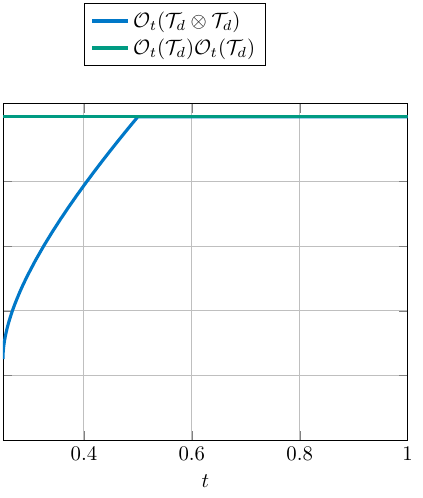}
         \caption{Plot of $\OC_t$ for two joint channel uses and two separate uses
         of the same trace channel.}
     \end{subfigure}
     \hspace{1cm}
    \begin{subfigure}[b]{0.4\textwidth}
         \centering
         \includegraphics[width=0.8\textwidth]{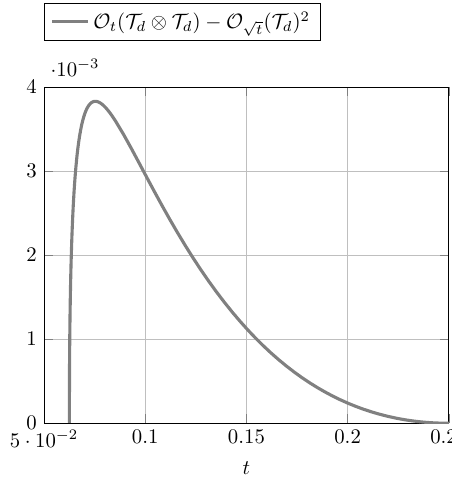}
         \caption{Plot comparing $\OC_t$ over two channel uses with
         $\OC_{\sqrt{t}}$ over a single use of each channel.}
     \end{subfigure}
    \caption{Non-multiplicative behavior with the trace channel $\Tr_d$ with $d=2$.}
    \label{fig:Tr}
\end{figure}

A second inequality of the type in~\eqref{eq:N1N2} is
\begin{equation}
    \OC_t(\NC_1 \ot \NC_2) \geq \OC_{\sqrt{t}}(\NC_1) \OC_{\sqrt{t}}(\NC_2),
    \label{eq:N1N2Rtr}
\end{equation}
which follows from the observation that if optimizers $\rho_{r1a1}^* \in
S^{\sqrt{t}}$ and $\rho_{r2a2}^* \in S^{\sqrt{t}}$ then their product
$\rho_{r1a1}^* \ot \rho_{r2a2}^* \in S^t$.

The right side of the above equation represents the maximum fidelity with a
fully entangled state at the output of two channels when using a product
channel input, such that each input has purity at most $\sqrt{t}$.  The left
side represents this same maximum fidelity, however the joint channel input can
now have correlations but the joint input itself has purity at most $t$.
It is worth asking if the left side can be larger than the right side, i.e.,
does the overlap with a fully entangled state increase when uncorrelated
channel inputs, each with purity at most $\sqrt{t}$, are replaced with
correlated inputs with purity at most $t$?

We find the inequality~\eqref{eq:N1N2Rtr} can be strict. If $\NC_1 = \NC_2 =
\IC_d$, the identity channel, then $\NC_1 \ot \NC_2 =
\IC_{d^2}$, and the inequality
\begin{equation}
    \OC_r(\IC_d \ot \IC_d) \geq \OC_{\sqrt{r}}(\IC_d) \OC_{\sqrt{r}}(\IC_d).
\end{equation}
is strict at $d=2$ for all $\frac{1}{d^4}<r<1$~(see Fig.~\ref{fig:Id}).
When $\NC_1 = \NC_2 = \TC_d$ then the inequality
\begin{equation}
    \OC_r(\TC_d \ot \TC_d) \geq \OC_{\sqrt{r}}(\TC_d) \OC_{\sqrt{r}}(\TC_d)
\end{equation}
at $d=2$ remains strict for $1/d^4 < t < 1/d^2$ and becomes an equality for 
$1/d^2 \leq t \leq 1$ and at $t = 1/d^4$~(see Fig.~\ref{fig:Tr}).
Lastly, we consider the inequality,
\begin{equation}
    \OC_t(\IC_d \ot \TC_d) \geq \OC_{\sqrt{t}}(\IC_d) \OC_{\sqrt{t}}(\TC_d)
\end{equation}
which is strict at $d=2$ for all $1/d^4 < t < 1$~(see Fig.~\ref{fig:IdTr}).

\begin{figure}[ht]
    \centering
    \begin{subfigure}[b]{0.4\textwidth}
         \centering
         \includegraphics[width=0.8\textwidth]{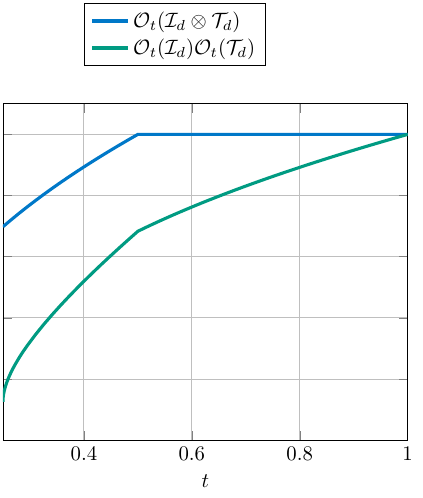}
         \caption{Plot of $\OC_t$ for joint and separate
         uses of the identity and trace channels.}
     \end{subfigure}
     \hspace{1cm}
    \begin{subfigure}[b]{0.4\textwidth}
         \centering
         \includegraphics[width=0.8\textwidth]{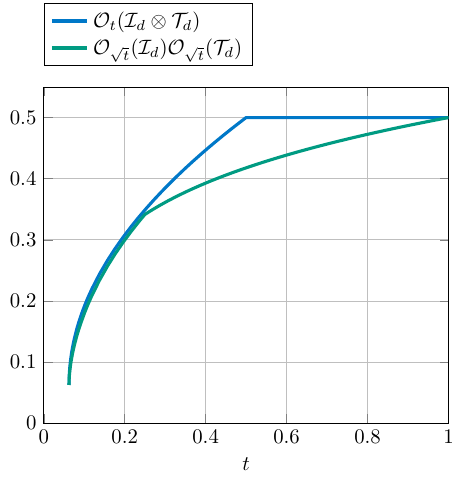}
         \caption{Plot comparing $\OC_t$ for joint use of identity and trace channel
         with $\OC_{\sqrt{t}}$ for separate use of each channel.}
     \end{subfigure}
    \caption{Non-multiplicative behavior of using Trace and identity channel
        with $d=2$}
    \label{fig:IdTr}
\end{figure}

Another application of the limited purity set up is discussed next.

\section{Purity Limited Tomography}
\label{sec:tomo}

Consider the case in quantum state tomography where identically prepared copies
of an unknown state $\rho \in \LC(\HC_a)$ are measured to construct a
(classical)~description of $\rho$.  Typically, preparation of $\rho$ and its
measurement are both noisy, in addition the number of measurements is finite.
The noise and finite nature of measurement data makes constructing a classical
description of $\rho$ non-trivial.

Mathematically, measurements can be described using a positive operator-valued
measure~(POVM) $\{\Lm_i\}$, a collection of positive semi-definite operators
summing to the identity.
Let $\{\Lm_i\}$ form a basis of $\LC(\HC_a)$. In a given of tomography on a
state $\rho$ using POVM $\{Lm_i\}$, let $m_i$ denote the number of times
the outcome $i$ corresponding to POVM element $\Lm_i$ is observed, and $m =
\sum_i m_i$ be the total number of measurements. Choose $\hat p_i = m_i/m$ to
be an estimate of the probability of observing outcome $i$. Solving a system of
linear equations,
\begin{equation}
    \Tr(\rho \Lm_i) = \hat p_i,
    \label{eq:linConst}
\end{equation}
results in the well-known linear inversion estimate of $\rho$,
\begin{equation}
    H = \sum_i \hat p_i \Lm_i^D
    \label{eq:linInv}
\end{equation}
where $\{\Lm_i^D\}$ are the dual basis corresponding to $\{\Lm_i\}$ satisfying
\begin{equation}
    \inpV{\Lm_i^D}{\Lm_j} := \dl_{ij}.
\end{equation}
A straightforward generalization of the above procedure can be obtained when
POVM $\{\Lm_i\}$ are replaced with a Hermitian basis of operators.
The linear inversion estimate in~\eqref{eq:linInv} has unit trace but it need not
be positive semi-definite.
However, an estimate that is both unit trace and positive semi-definite can be
obtained from the linear inversion estimate~\eqref{eq:linInv}. This maximum
likelihood estimate~(MLE) of $\rho$~\cite{SmolinGambettaEA12a} is the optimum
to the convex optimization problem
\begin{equation}
    \underset{\rho \in S}{\text{minimize}} \; || \rho - H||_2^2.
    \label{eq:MLE}
\end{equation}

One may measure the purity of the state $\rho$~(for instance by doing a SWAP
test on two identical copies of a system~\cite{BarencoBerthiaumeEA96,
BuhrmanCleveEA01a} or a randomized benchmarking
experiment~\cite{WeiPritchettEA23}), however the MLE, as stated
in~\eqref{eq:MLE}, can't naturally take this purity into account. For instance,
appending a purity constraint to~\eqref{eq:linConst} results in a non-linear
equation in $\rho$. We propose modifications that take the purity into account.

Suppose the measured purity is $t$ then a MLE estimate which takes this purity
into account can be constructed in two different way. The first method is to
carry out the optimization in~\eqref{eq:MLE} with the additional constraint
$\Tr(\rho^2) = t$, i.e.,
\begin{equation}
    \underset{\rho \in S, \Tr(\rho^2) = t}{\text{minimize}} \; || \rho - H||_2
    \label{eq:purityConst1}
\end{equation}
The optimization above is non-convex, even though the objective function
$||\rho-H||_2$ is convex in $\rho$, the set of states $\rho \in S$ with
$\Tr(\rho^2) = t$ is not convex. However, we show in Sec.~\ref{sec:reformprob}
this non-convex problem can be solved optimally and efficiently using our
algorithm.
The optimum found gives an estimate $\hat \rho_1$ whose purity is exactly $t$.

It is also possible to obtain an estimate with purity at most $t$.  One can
carry out a second procedure by optimizing the objective function
in~\eqref{eq:MLE} with the additional constraint $\Tr(\rho^2) \leq t$, i.e.,
\begin{equation}
    \underset{\rho \in S^t}{\text{minimize}} \; || \rho - H||_2
    \label{eq:purityConst2}
\end{equation}
The convex optimization problem above can solved using a duality based
algorithm discussed in Sec.~\ref{sec:reformprob}.

\section{Optimizing over probability vectors}
\label{sec:reformprob}

\subsection{Maximizing expectation value}
We simplify the optimization problem~\eqref{opt:maxFPhiPur} where $H$ is
a $d \times d$ Hermitian matrix.
Consider the singular value decomposition of $H$ in~\eqref{opt:maxFPhiPur},
\begin{equation}
    H = U D U^{\dag},
    \label{eq:choiEig}
\end{equation}
where $U$ is unitary and $D$ is diagonal; the $i$-th diagonal entry of $D$ is
an eigenvalue of $H$ with corresponding eigenvector forming the $i$-th column
of $U$.
Obtaining the eigendecomposition takes at most $O(d^3)$ steps or fewer using
more advances algorithms~\cite{PanChen99, DemmelDumitriuEA07}.
Let 
\begin{equation}
    \sg := U^{\dag} \rho U,
    \label{eq:newVar}
\end{equation}
it is easy to check that $\sg \in S^t$ and $\rho \in S^t$ imply
one another. As a result, re-write~\eqref{opt:maxFPhiPur} as
\begin{equation}
    \underset{\sg \in S^t}{\text{maximize}} \; \Tr(\sg D),
    \label{opt:maxFPhiPur-3}
\end{equation}
In this re-writing, restricting $\sg$ to be diagonal~($[\sg]_{ij} = 0$ if $i
\neq j$) does not change the optimum value.
To prove this claim, note adding this restriction to the maximization
problem~\eqref{opt:maxFPhiPur-3} results in an optimization problem with a
possibly lower optimum value.  However, any optimizer $\sg^*$
to~\eqref{opt:maxFPhiPur-3} can be turned into a diagonal operator,
\begin{equation}
    \sg' = \MC(\sg^*) = \sum_j \dya{j} \sg^* \dya{j}
    \label{eq:mChan}
\end{equation}
such that $\Tr(\sg^*D) = \Tr(\sg'D)$ and $\sg' \in S^t$. This last relation,
$\sg^* \in S^t \implies \sg' \in S^t$ can be shown as follows.  Notice, $\sg'$
is obtained by passing $\sg^*$ through a measurement channel $\MC$ with Kraus
operators $\{ \dya{i} \}$.  Thus $\sg^*$ is a density operator implies $\sg'$
is also a density operator. This channel $\MC$ satisfies a pinching
inequality~(see pg.~97 in~\cite{Bhatia97}) and thus,
\begin{equation}
    ||\sg'||_2 \leq ||\sg^*||_2 .
    \label{eq:pinch1}
\end{equation}
The above inequality along with the fact that $\sg^* \in S^t \implies
||\sg^*||_2 \leq \sqrt{t}$ completes the proof.

A diagonal operator $\sg \in S^t$ can be represented using a real vector
$\pB$ whose $i^{\text{th}}$ entry is simply $[\sg]_{ii}$. This
$d$-dimensional vector must have non-negative entries, $\pB \geq 0$, and its
inner product with itself must be less than $t$, i.e, $\pB \cdot \pB \leq t$.
The diagonal operator $D$ in~\eqref{eq:choiEig} can be written as $D =
\text{diag} (\qB)$, where $\qB$ has length $d$ and the notation $\text{diag}(\aB)$
represents a diagonal matrix whose $(i,j)$ entry is $\dl_{ij}\aB_i$.
Using these vectors $\pB$ and $\qB$, the optimization in~\eqref{opt:maxFPhiPur-3}
becomes equivalent to
\begin{equation}
    \underset{\pB \in P(t)}{\text{maximize}} \; \pB \cdot \qB
    \label{opt:maxFPhiPur-4}
\end{equation}
where
\begin{equation}
    P(t) := \{ \pB \; | \; \pB \geq 0, \; \sum_i \pB_i = 1,\; \text{and} \;
    \pB.\pB \leq t\}.
    \label{eq:prbSet}
\end{equation}
Using arguments similar to those below~\eqref{eq:CrSet}, one can show the
optimization problem in~\eqref{opt:maxFPhiPur-4} to be convex.
Once the optimum value of~\eqref{opt:maxFPhiPur-4} is known, the optimum value
of~\eqref{eq:choiEig}, which equals that of~\eqref{opt:maxFPhiPur-4}, is also
known with no additional effort. If required, the optimizer
of~\eqref{eq:choiEig} can be constructed from that of~\eqref{opt:maxFPhiPur-4}
with at most $O(d^3)$ additional steps.
If $\pB^*$ is an optimizer of~\eqref{opt:maxFPhiPur-4} then $\sg^* =
\text{diag}(\pB^*)$ is an optimizer of~\eqref{opt:maxFPhiPur-3} and $\rho^* =
U^{\dag} \sg^* U$ is an optimizer of~\eqref{opt:maxFPhiPur}. This reconstruction
of $\rho^*$ from $\pB^*$ can always be done using ordinary matrix multiplication
in $O(d^3)$ operations, or fewer operations using more efficient methods.

The optimization problem in~\eqref{opt:maxFPhiPur-4} can be solved numerically
using a direct recursive algorithm outlined as Alg.~\eqref{alg:recursiveShort}.
This recursive algorithm gives the optimum and optimizer
of~\eqref{opt:maxFPhiPur-4} in $O(d^2)$ steps~(see comments
below~\eqref{eq:prbSet} for obtaining optimum and optimizer
of~\eqref{eq:choiEig}). A proof for its correctness and the algorithms analysis
are in Sec.~\ref{sec:recursiveAlgo}. 

\RestyleAlgo{boxruled}
 \begin{algorithm}[H]
 \DontPrintSemicolon
 \SetAlgoLined
\BlankLine
     $\bar{\pB} :=$ $\pB$ pushed towards boundary of $E_t$~\eqref{eq:prbSet2}
\BlankLine
     \uIf{$t \geq 1$ or $t \leq 1/(d-1)$ or $\bar{\pB} \geq 0 $ or $\qB = \One/n$}{
     \Return{Appropriate algebraic solution} }
\BlankLine
 \Else{
     $\bar{\qB}$ := $\qB$ pushed to boundary $P'$ of $P$\;
     \Return{\eqref{opt:maxFPhiPur-4} solved on $P'$ with $\qB \rightarrow \bar{\qB}$}
 }
     \caption{Outline of recursive algorithm~\eqref{alg:recursive}}
     \label{alg:recursiveShort}
\end{algorithm}

The problem in~\eqref{opt:maxFPhiPur-4} can also be solved via a convex
one-parameter minimization outlined in Alg.~\eqref{alg:dualOutline}. The full
algorithm, its proof for correctness using Lagrange duality, and
its analysis are in Sec.~\ref{sec:dualAlgo}.  This algorithm gives the optimum
and optimizer of~\eqref{opt:maxFPhiPur-4} in $O(d)$ steps. We also show the
optimizer given by this algorithm can always be chosen such that its purity is
exactly $t$~(see Lemma~\ref{lem:purityEq}). As a result, the optimizer found
this way is also an optimizer for the non-convex problem
\begin{equation}
    \underset{\pB \in P, \; \pB \cdot \pB = t}{\text{maximize}} \; \pB \cdot \qB.
    \label{opt:maxFPhiPur-6}
\end{equation}

\RestyleAlgo{boxruled}
 \begin{algorithm}[H]
 \DontPrintSemicolon
 \SetAlgoLined
 \uIf{$t \geq 1$ or $t \leq 1/d$}{
     \Return{Algebraic Solution} }
\BlankLine
 \Else{Minimize a one-parameter function \;
     Construct optimizer to~\eqref{opt:maxFPhiPur-4}\;
     \Return{Optimum, Optimizer}}
     \caption{Outline of dual algorithm}
     \label{alg:dualOutline}
\end{algorithm}

\subsection{Minimizing distance}
\label{sec:minD}

Optimization problems~\eqref{eq:purityConst1} and \eqref{eq:purityConst2} can
be simplified. In the first problem~\eqref{eq:purityConst1}, replace the
objective function with its square,
\begin{equation}
    ||\rho - H||_2^2 = \Tr(\rho^2) + \Tr(H^2) - 2 \Tr(\rho H),
    \label{eq:objFun}
\end{equation}
use the constraint $\Tr(\rho^2) = t$, to arrive at a new problem,
\begin{equation}
    \underset{\rho \in S, \Tr(\rho^2) = t}{\text{maximize}} \; \Tr(\rho H),
\end{equation}
with the same optimizer as~\eqref{eq:purityConst1}.
The optimum of the above problem is bounded from above by that
of~\eqref{opt:maxFPhiPur}.  This upper bound can be found using
Algorithm~\ref{alg:dualOutline} outlined previously.  The algorithm can return
an optimizer whose purity is exactly $t$~(see discussion above
eq.~\eqref{opt:maxFPhiPur-6}). Consequently the upper bound found this way is
tight and solves~\eqref{eq:purityConst1}. The optimization performed this way
has identical performance to that of Algorithm~\ref{alg:dualOutline} being used
to solve~\eqref{opt:maxFPhiPur}.

Optimizer of the second problem~\eqref{eq:purityConst2} can be obtained by
minimizing the function to the right of~\eqref{eq:objFun}.
Minimization of this function can be re-written using~\eqref{eq:choiEig}
and~\eqref{eq:newVar} as
\begin{equation}
    \underset{\sg \in S^t}{\text{minimize}} \; || \sg - D||^2_2.
    \label{eq:minTwo}
\end{equation}
Using the definition of $\MC$ in~\eqref{eq:mChan} and the pinching
inequality~(see pg.~97 in~\cite{Bhatia97}) we find
\begin{equation}
    || \MC(\sg - D) ||_2 \leq || \sg - D||_2
\end{equation}
and
\begin{equation}
    \MC(\sg - D) = \sg'- D 
\end{equation}
where $\sg'$ is diagonal in the standard basis. Using arguments similar to
those below~\eqref{eq:pinch1} we find $\sg' \in S^t$. Let $\sg' = \text{diag} (\pB)$,
$D = \text{diag} (\qB)$ then it follows that the minimization in~\eqref{eq:minTwo}
can be written as
\begin{equation}
    \underset{\pB \in P(t)}{\text{minimize}} \; || \pB - \qB||^2_2.
    \label{eq:minThree}
\end{equation}
The optimizer $\pB^*$ of the above optimization can be turned into an optimizer
$\rho^* = U^{\dag} \text{diag}(\pB^*) U$ of~\eqref{eq:purityConst2}.
The above minimization can be re-written as
\begin{equation}
    \underset{k \leq t}{\text{minimize}} \; 
    \{ \underset{\pB \in P, \pB \cdot \pB = k}{\text{minimize}} \; 
    || \pB - \qB||^2_2 \}.
    \label{eq:minFour}
\end{equation}
The inner minimization is equivalent to maximization~\eqref{opt:maxFPhiPur-6}.
That maximization can be carried out using Algorithm~\eqref{alg:dualOutline}.
The outer minimization in~\eqref{eq:minFour} is of a one-parameter function
over the interval $1/d \leq k \leq t$. This one parameter function over a
bounded interval can be minimized with relative ease. Numerics indicate this
function to be convex in its argument $k$. Such convexity can make the
minimization even easier using algorithms such as golden section
search~(see~\cite{PressTeukolskyEA92}).  
This algorithm shrinks the search space by a constant fraction in each iteration.
Such shrinking exponentially narrows the search space with each iteration. 

\section{Algebraic Solutions}
\label{sec:algSol}

We discuss a variety of algebraic solutions to the optimization
problem~\eqref{opt:maxFPhiPur-4}. The discussion uses the notation $\pB \in \Rbb^n$
instead of $\Rbb^d$ as in~\eqref{opt:maxFPhiPur-4}. The new notation helps
keep this section self-contained. We discuss properties of various
sets related to $P(t)$, defined in eq.~\eqref{eq:prbSet},
and a simplification of the objective function in~\eqref{opt:maxFPhiPur-4}.
In particular,
The optimization problem~\eqref{opt:maxFPhiPur-4} can be solved by focussing
on $\qB$ with $\One \cdot \qB = 1$. Let $s:= \sum_i \qB_i$, and define,
\begin{equation}
    \qB' := \qB + \One \frac{1-s}{n},
    \label{eq:qbPrime}
\end{equation}
a vector whose entries sum to one. For any $\pB \in P(t)$,
$\pB \cdot \qB' = \pB \cdot \qB + (1-s)/n$ and thus
\begin{equation}
    \underset{\pB \in P(t)}{\text{argmax}} \; \pB \cdot \qB' = 
    \underset{\pB \in P(t)}{\text{argmax}} \; \pB \cdot \qB.
    \label{opt:qbPrime}
\end{equation}
Thus, we restrict ourself to $\qB$ satisfying $\One \cdot \qB = 1$.

\subsection{Special subsets of $\Rbb^n$}

We are interested in three subsets of $\Rbb^n$, the probability simplex
$P$ in~\eqref{eq:prbSim}, the set $E_t$ in~\eqref{eq:prbSet2} and their
intersection, $P(t)$ in~\eqref{eq:prbSet}. Various properties of these sets
are useful in finding algebraic solutions to~\eqref{opt:maxFPhiPur-4}.
The Lemmas below capture properties of interest.

\begin{lemma}
    When $t \geq 1$, the set $P(t)$ is the same as $P$.
    \label{eq:tGeq1Lemm}
\end{lemma}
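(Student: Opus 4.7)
The plan is to establish the two inclusions $P(t) \subseteq P$ and $P \subseteq P(t)$ separately. The first inclusion is immediate from the definitions in eqs.~\eqref{eq:prbSim} and~\eqref{eq:prbSet}: every $\pB \in P(t)$ satisfies $\pB \geq 0$ and $\sum_i \pB_i = 1$, which are precisely the defining constraints of $P$. This inclusion does not even use the hypothesis $t \geq 1$.

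For the reverse inclusion, I would pick an arbitrary $\pB \in P$ and show $\pB \cdot \pB \leq t$. The cleanest route is to invoke the norm inequality~\eqref{eq:vecNormIneq}, which gives $||\pB||_2 \leq ||\pB||_1$. Since $\pB \geq 0$ with $\sum_i \pB_i = 1$, we have $||\pB||_1 = 1$, so $\pB \cdot \pB = ||\pB||_2^2 \leq 1$. The hypothesis $t \geq 1$ then yields $\pB \cdot \pB \leq 1 \leq t$, placing $\pB$ in $P(t)$. An equally short alternative is to observe coordinate-wise that each $\pB_i \in [0,1]$, whence $\pB_i^2 \leq \pB_i$; summing recovers the same bound.

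There is no substantive obstacle in this lemma. It simply records the geometric fact that the Euclidean-ball constraint $||\pB||_2 \leq \sqrt{t}$ becomes redundant on the probability simplex once $\sqrt{t} \geq 1$, since the $\ell_2$-diameter of $P$ is bounded by its $\ell_1$-diameter. The only minor care point is the borderline case $t = 1$, where the constraint $\pB \cdot \pB \leq t$ is tight at the extreme points $\eB(i)$ but still satisfied everywhere on $P$, so the set equality is unaffected.
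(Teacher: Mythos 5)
Your proof is correct and follows essentially the same route as the paper: both inclusions reduce to the norm inequality~\eqref{eq:vecNormIneq}, which gives $||\pB||_2 \leq ||\pB||_1 = 1$ for every $\pB \in P$, so the purity constraint is vacuous once $t \geq 1$. The extra coordinate-wise remark ($\pB_i^2 \leq \pB_i$) is a fine elementary alternative but adds nothing beyond the paper's argument.
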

\begin{proof}
    Follows from the definitions of $P$, $P(t)$, and
    inequality~\eqref{eq:vecNormIneq}.  Namely, $\pB \in P$ has $||\pB||_2 \leq
    ||\pB||_1 = 1$ and thus $||\pB||_2 \leq \sqrt{t}$ is not necessary when
    $t\geq 1$ and $P(t)$ is simply $P$.
\end{proof}

\begin{lemma}
    When $t<1/n$, the set $E_t$ and $P(t)$ are both empty.
    \label{lem:EmptyFt}
\end{lemma}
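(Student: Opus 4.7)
My plan is to show $E_t$ is empty via a Cauchy--Schwarz style lower bound on $\|\pB\|_2^2$ under the affine constraint $\One \cdot \pB = 1$, and then conclude emptiness of $P(t)$ from the inclusion $P(t) \subseteq E_t$.

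First I would recall the inequality in~\eqref{eq:vecNormIneq}, which states that for any $\pB \in \Rbb^n$ with at most $n$ nonzero entries, $\|\pB\|_1 \leq \sqrt{n}\,\|\pB\|_2$. For any $\pB$ with $\sum_i \pB_i = 1$, one has $\|\pB\|_1 \geq |\sum_i \pB_i| = 1$, so squaring gives $\|\pB\|_2^2 \geq 1/n$. Alternatively, this follows directly from Cauchy--Schwarz applied to $\pB$ and $\One$: $1 = (\sum_i \pB_i)^2 = (\pB \cdot \One)^2 \leq \|\pB\|_2^2 \|\One\|_2^2 = n\,\|\pB\|_2^2$.

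Next, suppose for contradiction that there exists $\pB \in E_t$. Then $\pB \cdot \pB \leq t < 1/n$, which contradicts the lower bound $\pB \cdot \pB \geq 1/n$ just derived. Hence $E_t = \emptyset$ whenever $t < 1/n$. Since $P(t)$ is defined in~\eqref{eq:prbSet} by the same two constraints $\sum_i \pB_i = 1$ and $\pB \cdot \pB \leq t$, together with the additional nonnegativity condition $\pB \geq 0$, we have $P(t) \subseteq E_t$, and so $P(t)$ is also empty.

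There is no real obstacle here; the only thing to be careful about is that the inequality in~\eqref{eq:vecNormIneq} is stated with the number of nonzero entries $k$ rather than $n$, so I would either invoke the weaker bound $\|\pB\|_1 \leq \sqrt{n}\,\|\pB\|_2$ explicitly or, equivalently, appeal directly to Cauchy--Schwarz.
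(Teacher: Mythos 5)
Your proposal is correct and follows essentially the same route as the paper: both derive the lower bound $\|\pB\|_2 \geq 1/\sqrt{n}$ from $\One \cdot \pB = 1$ together with the norm inequality~\eqref{eq:vecNormIneq} (your Cauchy--Schwarz variant is an equivalent phrasing), then conclude $E_t = \emptyset$ and hence $P(t) = \emptyset$ by inclusion. Your remark about using the weaker $\sqrt{n}$ form of~\eqref{eq:vecNormIneq} rather than the $\sqrt{k}$ form is a sensible point of care but does not change the argument.
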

\begin{proof}
    Any $\pB \in E_t$ satisfies $ \One \cdot \pB = 1 \leq ||\pB||_1$, using
    this equality together with~\eqref{eq:vecNormIneq} we get 
    \begin{equation}
        \frac{1}{\sqrt{n}} \leq || \pB ||_2,
        \label{eq:NormBound}
    \end{equation}
    for all $\pB \in E_t$. However any $\pB \in E_t$ has $||\pB||_2 \leq
    \sqrt{t} < 1/\sqrt{n}$~(see statement of this Lemma). Thus $E_t$ is empty
    and the intersection $P(t)$ of $P$ and $E_t$ is also empty.
\end{proof}

\begin{lemma}
    At $t=1/n$ the sets $E_t = P(t) = \{\One/n\}$.
    \label{lem:SingleEt}
\end{lemma}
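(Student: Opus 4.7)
The plan is to show that the lower bound on $\|\pB\|_2$ from the previous lemma becomes tight exactly at $t = 1/n$, and that tightness of the underlying Cauchy--Schwarz (or equivalently, norm) inequality forces $\pB$ to be the uniform vector. First, I would recall the lower bound $\|\pB\|_2 \geq 1/\sqrt{n}$ established in the proof of Lemma~\ref{lem:EmptyFt}: for any $\pB$ with $\sum_i \pB_i = 1$ we have $1 = \sum_i \pB_i \leq \|\pB\|_1 \leq \sqrt{n}\,\|\pB\|_2$ via~\eqref{eq:vecNormIneq}.

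At $t = 1/n$, the defining inequality of $E_t$ gives $\|\pB\|_2 \leq 1/\sqrt{n}$, so combined with the lower bound we obtain $\|\pB\|_2 = 1/\sqrt{n}$. This means both inequalities in the chain $1 \leq \|\pB\|_1 \leq \sqrt{n}\,\|\pB\|_2$ must be equalities. Equality in $\|\pB\|_1 \leq \sqrt{n}\,\|\pB\|_2$ forces all entries of $\pB$ to have the same absolute value, and equality in $\sum_i \pB_i \leq \|\pB\|_1$ forces all entries to be non-negative; together these give $\pB = c\One$ for some $c \geq 0$. The constraint $\sum_i \pB_i = 1$ then pins down $c = 1/n$, so $\pB = \One/n$.

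Finally, I would verify that $\One/n$ does lie in $E_t$ (indeed $\|\One/n\|_2^2 = 1/n = t$ and $\One \cdot \One/n = 1$), which gives $E_t = \{\One/n\}$. Since $\One/n \geq 0$ we also have $\One/n \in P$, so $P(t) = P \cap E_t = \{\One/n\}$ as claimed. There is no real obstacle here; the only subtlety is being careful that $E_t$ is defined without the positivity constraint, so one must argue positivity of $\pB$ from the equality case rather than assume it, which drops out automatically from the saturation of the two inequalities above.
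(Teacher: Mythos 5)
Your proof is correct and follows essentially the same route as the paper's: both pin down $\|\pB\|_2 = 1/\sqrt{n}$ by combining the constraint $\pB\cdot\pB \leq 1/n$ with the lower bound~\eqref{eq:NormBound}, and then invoke the equality case of Cauchy--Schwarz against the all-ones vector to force $\pB = \One/n$. The only cosmetic difference is that the paper applies Cauchy--Schwarz once directly to the pair $(\pB, \One/n)$, whereas you extract the same conclusion from saturation of the two-step chain $1 \leq \|\pB\|_1 \leq \sqrt{n}\,\|\pB\|_2$; your remark that positivity of the entries must be argued rather than assumed (since $E_t$ carries no positivity constraint) is a correct and worthwhile point of care.
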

\begin{proof}
    Any $\pB \in E_t$ satisfies $\pB \cdot \pB \leq t = 1/n$. This inequality
    together with~\eqref{eq:NormBound} gives $|| \pB ||_2 = 1/\sqrt{n}$. 
    In the Cauchy-Schwarz inequality,
    \begin{equation}
        |\uB \cdot \vB| \leq ||\uB||_2 ||\vB||_2,
        \label{eq:CS}
    \end{equation}
    where equality holds if and only if $\uB = \lm \vB$ for some $\lm \in
    \Rbb$, let $\uB = \pB \in E_t$ and $\vB = \One/n$. Then equality holds and
    $\pB = \lm \One/n$ for $\lm \in \Rbb$, however $\pB \in E_t$ implies $\One
    \cdot \pB = 1$ and thus $\lm = 1$. Since $E_t = \{\One/n\}$ contains a
    single element which is a probability vector, intersection of $E_t$ with
    the probability simplex, $P(t)$, is just $\{\One/n\}$
\end{proof}

\begin{lemma}
    The set $P(t)$ equals $E_t$ when $1/n \leq t \leq 1/(n-1)$.
    \label{lem:PtEtEq}
\end{lemma}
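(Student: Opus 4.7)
My plan is to prove the nontrivial containment $E_t \subseteq P(t)$; the reverse inclusion $P(t) \subseteq E_t$ is immediate from the definitions of these sets. Since $P(t) = P \cap E_t$, this reduces to showing that every vector $\pB$ satisfying $\One\cdot\pB = 1$ and $\pB\cdot\pB \leq t$ automatically has nonnegative entries and hence lies in the probability simplex.

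I would proceed by contradiction. Suppose some $\pB \in E_t$ has a strictly negative coordinate; by permutation symmetry, say $\pB_1 < 0$. Then $\sum_{i=2}^n \pB_i = 1 - \pB_1 > 1$, so applying the Cauchy–Schwarz inequality~\eqref{eq:CS} to the truncated vector $(\pB_2,\dots,\pB_n)$ and $\One/(n-1) \in \Rbb^{n-1}$ gives
\begin{equation}
    \sum_{i=2}^n \pB_i^2 \;\geq\; \frac{1}{n-1}\Bigl(\sum_{i=2}^n \pB_i\Bigr)^{\!2} \;=\; \frac{(1-\pB_1)^2}{n-1} \;>\; \frac{1}{n-1}.
\end{equation}
Adding the strictly positive term $\pB_1^2$ and using the hypothesis $t \leq 1/(n-1)$ yields $\pB\cdot\pB > 1/(n-1) \geq t$, contradicting $\pB \in E_t$. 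Thus every $\pB \in E_t$ obeys $\pB \geq 0$, so $\pB \in P(t)$, as required.

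The argument is elementary and I do not anticipate any genuine obstacle; the only step one must be a bit careful about is invoking the right regime of the hypothesis, namely the upper bound $t \leq 1/(n-1)$ (the lower bound $t \geq 1/n$ is needed only to ensure $E_t$ is nonempty, per Lemma~\ref{lem:EmptyFt}). As a sanity check on the endpoint, at $t = 1/(n-1)$ the vector $(0, 1/(n-1), \dots, 1/(n-1))$ together with its permutations saturates both $\pB\cdot\pB = t$ and the boundary of $P$, consistent with the claimed equality $P(t) = E_t$ at this value of $t$.
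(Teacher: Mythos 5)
Your proof is correct, and it takes a genuinely different route from the paper. The paper parametrizes points of $E_t$ as $\pB = \One/n + \dl\,\dB$ with $\dl \leq \dl_t = \sqrt{t-1/n}$ and a direction $\dB$ in the set $F$ of zero-sum unit-ball vectors, then solves the auxiliary problem $\min_{\dB\in F}\dB_i$ via strong duality and an explicit KKT certificate to get the coordinate bound $\dB_i \geq -\sqrt{(n-1)/n}$, which combined with $t \leq 1/(n-1)$ forces $\pB_i \geq 0$. You instead argue by contradiction: if $\pB_1 < 0$ then $\sum_{i\geq 2}\pB_i = 1-\pB_1 > 1$, so Cauchy--Schwarz on the truncated vector gives $\sum_{i\geq 2}\pB_i^2 \geq (1-\pB_1)^2/(n-1) > 1/(n-1) \geq t$, contradicting $\pB\cdot\pB \leq t$; this is shorter and entirely elementary, needing no duality machinery, and it correctly handles the endpoint $t = 1/(n-1)$ since only strictly negative coordinates are excluded (your sanity check with $(0,1/(n-1),\dots,1/(n-1))$ confirms tightness). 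What the paper's heavier route buys is reusable infrastructure: the decomposition \eqref{eq:pBForm}, the set $F$, and the bound \eqref{eq:dlMin} are used again for the extreme-point characterization, the explicit optimizer \eqref{eq:dirSol2}, and the counterexample showing $P(t) \subsetneq E_t$ for $t > 1/(n-1)$, whereas your argument proves just this lemma. You are also right that the lower bound $t \geq 1/n$ plays no role in the inclusion itself (it only guarantees nonemptiness); your argument in fact shows $E_t \subseteq P(t)$, hence $P(t) = E_t$, for every $t \leq 1/(n-1)$.
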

\begin{proof}
    Any $\pB \in E_t$ can be written as
    \begin{equation}
        \pB = \frac{\One}{n} + \dl \dB,
        \label{eq:pBForm}
    \end{equation}
    where $0 < \dl \leq \dl_t:= \sqrt{t - 1/n}$ and $\dB$ is a direction vector
    coming from the set
    \begin{equation}
        F = \{ \dB \, | \, \dB \in \Rbb^{n}, \quad \dB.\dB \leq 1, \quad
        \text{and} \quad \One.\dB = 0\}.
        \label{def:fSet}
    \end{equation}
    One can show that any $\dB \in F$ satisfies
    \begin{equation}
        \dB_i \geq -\sqrt{\frac{n-1}{n}}.
        \label{eq:dlMin}
    \end{equation}
    To show the above inequality, we may solve
    \begin{equation}
        \underset{\dB \in F}{\text{minimize}} \; \dB_i
        \label{opt:prbl}
    \end{equation}
    where $1 \leq i \leq n$ is fixed but arbitrary. Notice the objective
    function above, $f(\dB) = \eB(i) \cdot \dB$ is linear in $\dB$ and the set $F$ is
    convex~(it has the form~\eqref{def:set}).  Inequalities
    defining~\eqref{def:fSet} are strict for the zero vector which lies in $F$.
    Thus, strong duality holds and the KKT
    conditions~\eqref{eq:kkt1}-\eqref{eq:kkt4}~(appropriately modified for this
    minimization problem~\eqref{opt:prbl}) are both necessary and sufficient.
    Since the minimum value $\dB_i < 0$ can always be decreased by scaling
    $\dB$ to have norm one, and must have $\dB^* \cdot \dB^* = 1$.
    Let $k = \sqrt{\frac{n-1}{n}}$, then
    \begin{equation}
        \dB^* = \frac{k}{n-1} (\One - \eB(i)) -k \eB(i), \quad
        \uB^*  = k, \quad 
        \text{and} \quad
        \wB^* = -\frac{1}{n},
    \end{equation}
    satisfy the KKT conditions. Thus, $f^{*} = f(\dB^*) = -k$, and this
    minimum value gives the lower bound in~\eqref{eq:dlMin}.  When $t \leq
    1/(n-1)$, inequality~\eqref{eq:dlMin} together with~\eqref{eq:pBForm} gives
    $\pB_i \geq 0$, i.e., $\pB \in P(t)$. 
\end{proof}

When $t> 1/(n-1)$, $P(t)$ is a strict subset of $E_t$, i.e., one can find $\pB
\in E_t$ that are not in $P(t)$. For instance, in~\eqref{eq:pBForm}, if we let
$\dl = \dl_t$ and $\dB = \sqrt{n/(n-1)}(\One/n - \eB_1)$, then we obtain a
vector in $E_t$ which is not in $P(t)$ since this vector has $\pB_1 < 0$.

\begin{lemma}
    Any $\pB \in E_t$ with $t > 1/n$ is an extreme point of $E_t$ if and only
    if $\pB \cdot \pB = t$. 
    \label{lem:extreme}
\end{lemma}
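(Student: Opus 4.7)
The plan is to prove the two directions of the biconditional separately, using strict convexity of the squared $l_2$ norm for one implication and an explicit line-segment construction for the other.

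For the implication $\pB \cdot \pB = t \Rightarrow \pB$ extreme, I would suppose $\pB = \lm \qB + (1-\lm)\rB$ with $\qB, \rB \in E_t$ and $0 < \lm < 1$, and aim to force $\qB = \rB = \pB$. A direct expansion yields the identity
\[
\lm ||\qB||_2^2 + (1-\lm) ||\rB||_2^2 - ||\lm \qB + (1-\lm)\rB||_2^2 = \lm(1-\lm)\,||\qB - \rB||_2^2.
\]
Combining this with $||\pB||_2^2 = t$ and the bounds $||\qB||_2^2, ||\rB||_2^2 \leq t$ gives
\[
t \leq \lm t + (1-\lm) t - \lm(1-\lm)\,||\qB - \rB||_2^2 = t - \lm(1-\lm)\,||\qB - \rB||_2^2,
\]
which forces $||\qB - \rB||_2 = 0$, hence $\qB = \rB = \pB$. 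So $\pB$ admits no nontrivial convex decomposition in $E_t$ and is therefore an extreme point.

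For the converse I would prove the contrapositive: if $\pB \in E_t$ has $\pB \cdot \pB < t$, then $\pB$ is not extreme. Since $t > 1/n$ (together with the ambient assumption $t \leq 1$ ensuring $E_t$ is nondegenerate) forces $n \geq 2$, I can pick a nonzero direction $\dB = \eB(1) - \eB(2) \in \Rbb^n$ satisfying $\One \cdot \dB = 0$. Because $||\pB \pm \ep \dB||_2^2$ is continuous in $\ep$ and equals $||\pB||_2^2 < t$ at $\ep = 0$, I can choose $\ep > 0$ small enough that both $\pB \pm \ep \dB$ satisfy $||\pB \pm \ep \dB||_2^2 \leq t$, and they automatically satisfy $\One \cdot (\pB \pm \ep \dB) = 1$, so both lie in $E_t$. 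The identity $\pB = \hf(\pB + \ep \dB) + \hf(\pB - \ep \dB)$ then expresses $\pB$ as a nontrivial convex combination of two distinct points of $E_t$, so $\pB$ is not extreme.

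There is no serious obstacle: the argument rests only on the squared-norm identity (strict convexity of $||\cdot||_2^2$) and on continuity of the norm. The only minor bookkeeping point is securing a valid perturbation direction in the affine hyperplane $\One \cdot \dB = 0$, which is possible precisely because the hypothesis $t > 1/n$ rules out the degenerate one-dimensional case in which $E_t$ would collapse to a single point.
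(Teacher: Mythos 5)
Your proof is correct, but it runs along a somewhat different track than the paper's. For the direction $\pB \cdot \pB = t \Rightarrow \pB$ extreme, the paper assumes a general convex decomposition $\pB = \sum_i a_i \fB_i$, derives $0 = \sum_i a_i(t - \pB\cdot\fB_i)$, bounds each term via Cauchy--Schwarz ($\pB\cdot\fB_i \le t$), and then invokes the \emph{equality case} of Cauchy--Schwarz plus the normalization $\One\cdot\fB_i = 1$ to force $\fB_i = \pB$; you instead use the strict-convexity identity $\lm\|\qB\|_2^2 + (1-\lm)\|\rB\|_2^2 - \|\lm\qB+(1-\lm)\rB\|_2^2 = \lm(1-\lm)\|\qB-\rB\|_2^2$ on a two-point decomposition, which is self-contained, avoids the equality analysis of Cauchy--Schwarz, and is if anything slightly cleaner. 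For the converse, the paper writes a non-boundary $\pB$ as a convex combination of the center $\One/n$ and the boundary point of $E_t$ along the ray through $\pB$ (the form in~\eqref{eq:pBForm} with $\dl = \dl_t$), a global construction that ties into the parametrization the paper reuses elsewhere, but which degenerates at $\pB = \One/n$ (and has a sign quirk in the stated direction); your local perturbation $\pB \pm \ep(\eB(1)-\eB(2))$, which stays in the hyperplane $\One\cdot\dB = 0$ and inside the ball for small $\ep$ precisely because $\pB\cdot\pB < t$ is strict and $E_t$ carries no positivity constraint, handles every interior point uniformly, including the center. Your side remark that $t > 1/n$ together with $t \le 1$ rules out the degenerate $n=1$ case is a fair piece of bookkeeping the paper leaves implicit; it is not a gap in either argument.
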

\begin{proof}
    If $\pB \cdot \pB < t$, then $\pB$ is not an extreme point since $\pB$ can always be
    re-written as a convex combination of $\One/n$ and a vector of the form
    in~\eqref{eq:pBForm} with $\dl = \dl_t$ and $\dB = (\One/n - \pB)/ ||
    (\One/n - \pB)||_2$.
    On the other hand, if $\pB \cdot \pB = t$, then it must be an extreme point
    of $E_t$. Assume $\pB \cdot \pB = t$ but $\pB$ is not an extreme point,
    then there must exist distinct $\fB_i \in E_t$ which satisfy
    \begin{equation}
        \pB = \sum_i a_i \fB_i 
        \label{eq:extremePt},
    \end{equation}
    where $\fB_i \neq \pB$, $0 < a_i < 1$ and $\sum_i a_i = 1$. The above
    equation, together with $\pB \cdot \pB = t$, implies
    \begin{equation}
        0 = \sum_i a_i (t - \pB \cdot \fB_i).
        \label{eq:zeroSum}
    \end{equation}
    The dot product,
    \begin{equation}
        \pB \cdot \fB_i \leq t,
        \label{eq:CSpq}
    \end{equation}
    can be shown using the Cauchy-Schwartz inequality~\eqref{eq:CS} with $\uB =
    \pB$ and $\vB = \fB_i$.
    The above equation implies that each term in the sum~\eqref{eq:zeroSum} is
    non-negative but the sum is zero. Thus each term must be zero and equality
    must hold in~\eqref{eq:CSpq}. This equality only holds~(see
    below~\eqref{eq:CS}) when $\pB = \lm_i \fB_i$ and $\lm_i \in \Rbb$.  Since
    $\pB, \fB_i \in E_t$, $\lm_i=1$. However, $\lm_i=1$ contradicts the
    assumption containing~\eqref{eq:extremePt}.

\end{proof}

\subsection{Special solutions}

Among the special instances of~\eqref{opt:maxFPhiPur-4} that can be solved
directly, the first is
when the maximum purity in~\eqref{eq:prbSet}, $t \geq 1$. As discussed
in Lemma~\eqref{eq:tGeq1Lemm}, when $t \geq 1$ $P(t) = P$.
The optimization~\eqref{opt:maxFPhiPur-4} reduces to
\begin{equation}
    \underset{\pB \in P}{\text{maximize}} \; \pB \cdot \qB 
    \label{opt:maxFPhiPur-5}
\end{equation}
Optimum and optimizer of~\eqref{opt:maxFPhiPur-5} are simply
\begin{equation}
    \pB^* = \eB(i), 
    \quad \text{and} \quad \pB^* \cdot \qB = \qB_i,
    \label{eq:pbStar0}
\end{equation}
respectively, where $i \in \text{argmax}_j \{ \qB_j \}$.
This equation follows from the fact that the maximum of a linear function on a
convex set lies at the extreme points of the set. The extreme points of $P$ are
mentioned below~\eqref{eq:prbSim}.

A second instance of~\eqref{opt:maxFPhiPur-4} with an algebraic solution is $t
\leq 1/n$. At $t < 1/n$, $P(t)$ is a null set~(see Lemma~\eqref{lem:EmptyFt})
and one may say there is no optimum or optimizer; when $t = 1/n$, $P(t)$ has a
single element $\One/n$~(see Lemma~\eqref{lem:SingleEt}), which becomes the
optimizer with optimum value $\sum_i \qB_i/n$.

A third case with an algebraic solution is $n=2$ where
\begin{equation}
    \pB^* = \frac{1}{2} \big( 
    \One + \sqrt{2t-1} (\eB_1 - \eB_2) \text{sign} (2\qB_1 - 1 ) 
    \big)
    \label{eq:n2Sol}
\end{equation}
can be proven to be an optimizer as follows. Since~\eqref{opt:maxFPhiPur-4} is the
maximum of a linear function over a convex set $P(t)$, its optimizer lies at an
extreme points of $P(t)$. At $n=2$ and $1/2 < t \leq 1$, there are two extreme
points of $P(t)$,
\begin{equation}
    \pB_{\pm} = \frac{1}{2} \big( \One \pm \sqrt{2t-1} (\eB_1 - \eB_2) \big).
\end{equation}
When $\qB_1 \geq 1/2$, $\pB_{+} \cdot \qB \geq \pB_{-} \cdot \qB$ and $\pB^* =
\pB_{+}$, otherwise $\pB_{-} \cdot \qB > \pB_{+} \cdot \qB$ and $\pB^* =
\pB_{-}$. 

In~\eqref{opt:maxFPhiPur-4} if $\qB = \One/n$ then the $\pB^* = \One/n$.
When $t \leq 1/(n-1)$ we get
\begin{equation}
    \pB^* = \bar{\pB} := 
    \frac{\One}{n} + \dl_t \mB,
    \label{eq:dirSol2}
\end{equation}
where $\dl_t = \sqrt{t - 1/n}$, $\mB:= (\qB - \One/n)/\kp$, and $\kp^2 :=
\qB \cdot \qB - 1/n$, i.e., $\mB$ is a unit vector along $\qB - \One/n$ with $\One
\cdot \mB = 0$. The first step to proving this claim is to observe when $1/n \leq
t \leq 1/(n-1)$, the set $P(t) = E_t$~(see Lemma~\ref{lem:PtEtEq}).
The second step is to show that~\eqref{eq:dirSol2} is the optimizer for
\begin{equation}
    \underset{\pB \in E_t}{\text{maximize}}  \; \pB \cdot \qB.
    \label{eq:prob2}
\end{equation}
The linear objective function in~\eqref{eq:prob2} takes its maximum value over
the convex set $E_t$ at an extreme point of this set.  Any $\pB \in E_t$ is
extreme if $\pB.\pB = t$~(see Lemma~\ref{lem:extreme}). This
extreme point can be written in the form~\eqref{eq:pBForm} where $\dl =
\dl_t$~(defined above~\eqref{eq:dlMin}).  A fixed vector $\qB$ with $\One \cdot
\qB = 1$~(a property we can always assume for $\qB$, see discussion
containing~\eqref{eq:qbPrime}) can be written as
\begin{equation}
    \qB = \frac{\One}{n} + \kp \mB,
\end{equation}
$\mB$ is defined below eq.~\eqref{eq:dirSol2}.
\begin{align}
    \pB \cdot \qB &= \frac{1}{n} + \dl_t \kp \dB \cdot \mB 
    \leq \frac{1}{n} + \dl_t \kp ||\dB||_2 \cdot ||\mB||_2
    = \frac{1}{n} + \dl_t \kp
\end{align}
where the first inequality is an application of the Cauchy-Schwartz
inequality~\eqref{eq:CS}, and the final equality uses the fact that $\dB$ and
$\mB$ are both unit vectors. When $\pB$ above is $\bar{\pB}$
in~\eqref{eq:dirSol2} the above inequality becomes an equality. This proves the
claim in~\eqref{eq:dirSol2}.

In general, the vector $\bar{\pB} \in E_t$~\eqref{eq:dirSol2} is the optimizer
of~\eqref{eq:prob2}. Since $P(t) \subseteq E_t$, the optimum of~\eqref{eq:prob2}
bounds the optimum value of~\eqref{opt:maxFPhiPur-4} from above. If $\bar{\pB}
\in P(t)$, i.e., $\bar{\pB} \geq 0$, then these optimum values of
~\eqref{opt:maxFPhiPur-4} and~\eqref{eq:prob2} become equal and $\bar{\pB}$ is
an optimizer of~\eqref{opt:maxFPhiPur-4}. 
One special case where $\bar{\pB} \geq 0$ is when
\begin{equation}
    \bar{\qB} := \frac{\One}{n} + \dl \kp \mB, \quad \text{with} 
    \quad \dl = \frac{1}{1-n \qB^{\downarrow}_n}, 
    \label{def:qBar}
\end{equation}
satisfies 
\begin{equation}
    \bar{\qB} \cdot \bar{\qB} \geq t.
    \label{eq:qBarNorm}
\end{equation}
One may prove this case as follows. Notice $\bar{\qB}$ is simply the vector
obtained by pushing $\qB$ to the boundary of the probability simplex along
the direction $\mB$, i.e.,
\begin{equation}
    \bar{\qB}^{\downarrow}_n = 0, \quad  \text{and} \quad \One.\bar{\qB} = 1,
    \label{eq:qBarProp}
\end{equation}
One can easily verify that
\begin{equation}
    \bar{\qB} \cdot \bar{\qB} = \dl^{2} \kp^2 + \frac{1}{n}.
    \label{eq:qBarDot}
\end{equation}
To show $\bar{\pB} \geq 0$ it is sufficient to show $\pB^{\downarrow}_n
\geq 0$.  Notice,
\begin{align}
    \bar{\pB}^{\downarrow}_n &= \frac{1}{n} + \dl_t \mB^{\downarrow}_n 
    = \frac{1}{n}(1 - \frac{\dl_t}{\dl \kp}) 
    \geq 0
\end{align}
where the first equality comes from the definition of $\bar{\pB}$
in~\eqref{eq:dirSol2}, the second comes from~\eqref{def:qBar}
and~\eqref{eq:qBarProp}. The final inequality makes use of the definition of
$\dl_t$ below~\eqref{eq:pBForm}, and equations~\eqref{eq:qBarNorm}
and~\eqref{eq:qBarDot}.

\subsection{Special objective functions}
\label{sec:basisVec}

We now provide some explicit solutions to the problem
in~\eqref{opt:maxFPhiPur-4}. In~\eqref{opt:maxFPhiPur-4} suppose $\qB =
\eB(i)$, where $1 \leq i \leq n$.  Using eq.~\eqref{eq:dirSol2}, one obtains
\begin{equation}
    \bar{\pB} = \frac{\One}{n} + \sqrt{\frac{nt - 1}{n-1} }(\eB_1 - \frac{\One}{n}),
\end{equation}
whose entries are non-negative for all allowed values $1/n \leq t \leq 1$. Thus
$\bar{\pB}$ above is the optimizer of~\eqref{opt:maxFPhiPur-4} with optimum value
\begin{equation}
    \bar{\pB} \cdot \eB(i) =  \frac{1}{n}(1 + \sqrt{(nt-1)(n-1)}).
\end{equation}

If $\Rbb^n = \Rbb^d \ot \Rbb^d$ and $\qB = \eB_1 \ot \One/d$,
then 
\begin{equation}
    \bar{\pB} = \frac{\One}{d^2} + \sqrt{\frac{td^2 - 1}{d-1}} 
    (\eB_1 \ot \frac{\One}{d} - \frac{\One}{d^2}),
    \label{eq:pBar2}
\end{equation}
has non-negative entries for $1/d^2 \leq t \leq 1/d$ and
\begin{equation}
    \bar{\pB} \cdot \qB = \frac{1}{d^2}(1 + \sqrt{(td^2 - 1)(d-1)})
    \label{eq:opt2}
\end{equation}
Thus for $\qB = (\eB_1 \ot \One)/n$ and $1/d^2 \leq t \leq
1/d$, the optimizer and optimal value to~\eqref{opt:maxFPhiPur-4}
are~\eqref{eq:pBar2} and~\eqref{eq:opt2}, respectively. 
For $1/d \leq t \leq 1$ the optimizer and optimal value are
\begin{equation}
    \pB^* = \eB_1 \ot \pB', \quad \text{and} \quad
    \pB^* \cdot \qB = \frac{1}{d},
    \label{eq:pbStar}
\end{equation}
respectively, where $\pB' \in \Rbb^d$ is any probability vector that satisfies
$\pB'.\pB' \leq t$. Proof for this claim is the following. Let
\begin{equation}
    \pB = \sum_{ij} \pB_{ij} \eB(i) \ot \eB_j, \quad
    \pB^1 := \sum_{ij} \pB_{ij} \eB(i), \quad \text{and} \quad
    \pB^2 := \sum_{ij} \pB_{ij} \eB_j,
\end{equation}
be the joint and marginal probability distributions, respectively. Using these
distribution, write
\begin{equation}
    \pB \cdot \qB = \frac{1}{d} \pB^1_{1} \leq \frac{1}{d};
\end{equation}
the above inequality becomes an equality when $\pB$ above is replaced with
$\pB^*$ in~\eqref{eq:pbStar}. Notice $\pB^*$ is a valid probability
distribution and $\pB' \cdot \pB' \leq t$ ensure that $\pB^* \cdot \pB^* \leq
t$.

\section{Algorithmic solutions}
\label{sec:algos}

\subsection{Recursive Algorithm}
\label{sec:recursiveAlgo}

In what follows we provide a recursive algorithm for solving the problem
in~\eqref{opt:maxFPhiPur-4},
\begin{equation}
    \underset{\pB \in P(t)}{\max} \; \pB \cdot \qB,
    \label{opt:glnVer}
\end{equation}
where we assume $\One. \qB = 1$ without loss of generality~(see discussion
containing~\eqref{eq:qbPrime}).

\RestyleAlgo{boxruled}
 \begin{algorithm}[H]
 \DontPrintSemicolon
 \SetAlgoLined
 \SetKwInOut{Input}{Input}\SetKwInOut{Output}{Output}
     \Input{$t \in \Rbb, \qB \in \Rbb^n$ with $\One \cdot \qB = 1$}
     \Output{$f(t, \qB, n) = (\pB^*, \pB^* \cdot \qB)$, $\pB^* \in 
     \underset{\pB \in P(t)}{\text{argmax}} \; \pB \cdot \qB$}
\BlankLine
 \uIf{$t \geq 1$}{
    $\pB^* \in \underset{\pB \in P}{\text{argmax}} \; \pB \cdot \qB$ \;
     \Return{($\pB^*, \pB^* \cdot \qB)$} }
\BlankLine
 \uElseIf{$t < 1/n$}{
     \Return{(None, $-\inf$)} }
\BlankLine
 \uElseIf{$t = 1/n$}{
     \Return{$(\One/n, 1/n)$}}
\BlankLine
 \uElseIf{$n = 2$}{
     $\pB^* = \frac{1}{2}\big(\One + \sqrt{2t-1} (1,-1) \text{sign}(2\qB_1 - 1) \big) $ \;
     \Return{($\pB^*, \pB^* \cdot \qB)$} }
\BlankLine
     $\bar{\pB} = \begin{cases}
         \frac{\One}{n} & \text{if} \; \qB = \One/n \\
         \frac{\One}{n} + \sqrt{t-\frac{1}{n}} \frac{\qB - \One/n}{||\qB - \One/n||} & \text{otherwise}
     \end{cases}
     $
\BlankLine
     \uElseIf{$t \leq \frac{1}{n-1}$ or $\bar \pB \geq 0$}{\Return{$(\bar{\pB}, \bar{\pB} \cdot \qB)$}}
 \Else{
     $(\bar{\qB}, i) = $ getClosestFace($\qB,n$)\;
     $\bar{\qB} = $ delete($\qB, i$)\;
     $(\aB^*, \aB^* \cdot \bar{\qB}) = f(t, \bar{\qB}, n-1)$ \;
     $\pB^* =$ append$(\aB^*, i, 0)$ \;
     \Return{$(\pB^*, \pB^*.\qB)$}
 }
 \caption{Maximum overlap with bounded purity}
     \label{alg:recursive}
\end{algorithm}

The key idea of this recursive algorithm is to solve certain base cases,
otherwise solve the problem in lower dimension. The simplest base cases are
those where the parameter $t \geq 1$, or $t \leq 1/n$, or $n=2$, or $\qB =
\One/n$.  A more complicated base case occurs when $\qB$ or $t$ is such that
solving~\eqref{opt:maxFPhiPur-4} is equivalent to solving the problem without
the $\pB \geq 0$ constraint. In all other cases, it suffices to solve the
problem by restricting $\pB$ to be on an $(n-1)$-dimensional boundary of the
probability $n$-simplex $P$.
This final step along with all other steps of the algorithm are listed as a
pseudo-code labeled Algorithm~\ref{alg:recursive}. In what follows, we 
describe various steps of this algorithm.

\begin{itemize}
    \item In line 1, if the base case $t \geq 1$ is found to be true, then its
        solution, see discussion containing~\eqref{eq:pbStar0}, is returned.
        Similar comments apply for lines 4 and 6.
    \item In Line 8, if $n=2$ then its solution~(see~\eqref{eq:n2Sol}) is
        returned.
    \item Line 11 defines $\bar{\pB}$. As discussed in the paragraphs
        containing~\eqref{eq:prob2} and~\eqref{def:qBar}, this vector solves
        the problem at hand~\eqref{opt:glnVer} when $t \leq 1/(n-1)$ or
        $\bar{\pB} \geq 0$ or $\qB = \One/n$ or $\bar{\qB} \cdot \bar{\qB} \geq
        t$~(see discussion containing~\eqref{def:qBar}). The vector is returned
        appropriately in lines 12 and 13.
    \item In line 14 we begin a recursion step.  First, in line 15 the
        subroutine {\tt getClosestFace} gets $\bar{\qB}$, defined
        in~\eqref{def:qBar}, along with the index $i$ where $\qB_i \leq \qB_j$
        for all $1 \leq j \leq n$. Next, in line 16, the $i^{\text{th}}$ entry
        of $\bar{\qB}$ is popped out, reducing the dimension of $\bar{\qB}$
        from $n$ to $n-1$. With this $(n-1)$-dimensional vector as input, we
        recurs to solve the problem in $n-1$ dimensions. The solution is
        obtained in line 17. This solution, an $(n-1)$-dimensional vector
        $\aB^{*}$, is turned into an $n$-dimensional vector $\pB^*$ by
        re-inserting $0$ at the $i^{\text{th}}$ coordinate in line 18. 
\end{itemize}
To prove this recursion procedure is correct, it suffices to show that if
\begin{equation}
    \pB^* \in \underset{\pB \in P(t)}{\text{argmax}} \; \pB. \qB 
    \quad \text{and} \quad 
    \bar{\qB} \cdot \bar{\qB} < t \quad \text{then} \quad
    \pB^*_i = 0.
    \label{eq:claim}
\end{equation}
Proving the above claim is sufficient because showing $\pB^*_i = 0$ implies
that the $i^{\text{th}}$ coordinate can be dropped and the problem can be
solved for the rest of the coordinates. The additional condition, $\bar{\qB}
\cdot \bar{\qB} < t$, can always be assumed since whenever $\bar{\qB} \cdot
\bar{\qB} \geq t$ the optimizer $\bar{\pB}$~(see discussion containing
eq.~\eqref{def:qBar}) is returned in previous lines 12 and 13.

To prove~\eqref{eq:claim} use the definitions of $P(t)$ in~\eqref{eq:prbSet},
$\bar{\qB}$ in~\eqref{def:qBar}, and the fact that in~\eqref{def:qBar} $\dl
\geq 0$ to obtain
\begin{equation}
    \pB^{*} \in \underset{\pB \in P(t)}{\text{argmax}} \; \pB. \qB  
    \iff
    \pB^{*} \in \underset{\pB \in P(t)}{\text{argmax}} \; \pB. \bar{\qB}.  
    \label{eq:optProb}
\end{equation}
Any $\pB^*$ above must satisfy $\pB^*_i = 0$ for some $i$ where we are given
$1/n < t < 1$, $\bar{\qB} \cdot \bar{\qB} < t$, $\qB_i = 0$, and $\qB \neq 0$.

Notice $\pB. \bar{\qB}$ is linear, $P(t)$ is a convex set, and $\One/n \in
P(t)$ satisfies $\One/n > 0$, $\One. \One/n = 1$, and $(\One/n) \cdot (\One/n)
< t$.  As a result, strong duality holds for the optimizations on the right
side of ~\eqref{eq:optProb} and the KKT
conditions~\eqref{eq:kkt1}-\eqref{eq:kkt4} are both necessary and sufficient
for optimality.
Assume $\pB^{*}$ is optimal but $\pB^{*}_i \neq 0$ for all $i$. The KKT
conditions imply that $\pB^{*} \in P(t)$ satisfies
\begin{equation}
    \bar{\qB}_j + \uB^*_j - 2w^* \pB_j^* - z = 0 \quad \forall j
    \label{eqA:KKT1}
\end{equation}
while $\uB^* \geq 0$ satisfies
\begin{equation}
    \uB_j^* \pB_j^* = 0 \quad \forall \; 1 \leq j \leq n, 
    \label{eqA:KKT2}
\end{equation}
and $w^* \geq 0$ satisfies
\begin{equation}
    w^* (\pB^* \cdot \pB^* - t) = 0.
    \label{eqA:KKT3}
\end{equation}
Since $\pB^{*}_i \neq 0$ and $\pB^* \in P(t)$ we get $\pB^{*} > 0$. This final
inequality, together with~\eqref{eqA:KKT2} gives $\uB^* = 0$. If we assume $w^* =
0$, then~\eqref{eqA:KKT1} gives $\bar{\qB}_j = z$ for all $j$. This is
inconsistent, $\bar{\qB}^{\downarrow}_n = 0$ gives $z=0$ which implies $\bar{\qB}=0$ in
contradiction to $\sum_i \bar{\qB_i} = 1$. Thus $w^* \neq 0$, as a result
from~\eqref{eqA:KKT3}, $\pB^* \cdot \pB^* - t = 0$. Setting $\uB = 0$
in~\eqref{eqA:KKT1} and $\One \cdot \pB^* = \One \cdot \bar{\qB} = 1$ gives
\begin{equation}
    \pB_j^* = \begin{cases}
        \frac{1}{n} + \frac{1}{2w^*}(\bar{\qB_j} - \frac{1}{n}) & \text{if} 
        \quad j \neq i \\
        \frac{1}{n}(1 -  \frac{1}{2w^*}) & \text{otherwise}
    \end{cases}.
\end{equation}
Since $\pB^* > 0$ we get 
\begin{equation}
    w^* > 1/2
    \label{eq:vGreater}
\end{equation}
The above equation, together with $\pB^* \cdot \pB^* = t$, 
$\One \cdot \pB^* = \One \cdot \bar{\qB} = 1$ gives
\begin{equation}
    t - \frac{1}{n} = \frac{1}{4(w^*)^2}( \bar{\qB} \cdot \bar{\qB} -\frac{1}{n}).
\end{equation}
Since $\bar{\qB} \cdot \bar{\qB} < t$ we get $w^* < 1/2$. Which
contradicts~\eqref{eq:vGreater}. Thus, we cannot assume $\pB^*_i \neq 0$ for
all $1 \leq i \leq n$. Suppose $\pB^*_j =0$ for some $j \neq i$ where
$\bar{\qB}_i = 0$. Then consider another vector $\tilde{\pB}$ obtained by
swapping the $i^{\text{th}}$ and $j^{\text{th}}$ coordinate of $\pB^*$.  Since
$\bar{\qB} \geq 0$~(see eq.~\eqref{def:qBar}), $\pB^* \geq 0$, and $\tilde{\pB}
\geq 0$ we get $\pB^* \cdot \qB \leq \tilde{\pB} \cdot \qB$.  As a result, 
we can always assume $\pB^*_i = 0$ when $\bar{\qB}_i = 0$.

\subsection{Dual Algorithm}
\label{sec:dualAlgo}

The optimization problem in~\eqref{opt:glnVer} can also be solved using
Lagrange duality~(see Sec.~\ref{sec:opt}). The dual formulation is not needed
when (1) $t \geq 1$, this problem is easily solved~(see discussion
containing \eqref{eq:pbStar0}), (2) $t < 1/n$, here~\eqref{opt:glnVer} is
infeasible, and (3) when $t = 1/n$, here~\eqref{opt:glnVer} has an algebraic
solution. In all other cases, we construct the Lagrangian
for~\eqref{opt:glnVer},
\begin{align}
    \label{eq:lagrange}
    L(\pB, \uB, w, z) = -w \pB \cdot \pB + \pB \cdot (\uB + \qB -z \One) + (wr +z),
\end{align}
where $\uB \geq 0$, $w \geq 0$, and $z \in \Rbb$ are called Lagrange
multipliers or dual variables.  The Lagrange dual function~\eqref{def:dualObj}
takes the form
\begin{equation}
    g(\uB, w, z) = 
    \begin{cases}
        \infty & \text{if} \; w = 0 \; \text{and} \; \uB \neq z\One - \qB \\
        wt + z & \text{if} \; \uB = z\One - \qB \\
        \frac{1}{4w} || \uB + \qB - z\One||_2^2 + wt + z & \text{if} \; w \neq 0
        \; \text{and} \; \uB \neq z\One - \qB 
    \end{cases}.
    \label{eq:gCases}
\end{equation}
Since $1/n < t$, there is a strictly feasible probability vector, $\pB = \One/n$
satisfying $\pB > 0$, $\One \cdot \pB = 1$, and $\pB \cdot \pB  = 1/n < t$. As
stated in the paragraph containing~\eqref{eq:strictFeas}, such strict
feasibility implies the optimal value of the dual,
\begin{equation}
    g^* = \pB^* \cdot \qB.
    \label{eq:strongDuality}
\end{equation}
When $1/n < t < 1$, one may compute $g^*$. For this computation, define 
\begin{equation}
    \vB_i(z) := \max(0, \qB_i - z),
    \label{eq:vDef}
\end{equation}
and
\begin{equation}
    h(z) := 
    \begin{cases} 
        \sqrt{t}||\vB(z)||_2 + z & \text{if} \; z < \qB^{\downarrow}_1 \\
        z & \text{otherwise} 
    \end{cases}.
    \label{eq:iDef}
\end{equation}
One can show that 
\begin{equation}
    h(z) \leq g(\uB, w, z)
    \label{eq:iIneqG}
\end{equation}
for all $\uB \geq 0$, $w \geq 0$ and $z \in \Rbb$ and equality holds for
special values of $\uB$ and $w$. The inequality is obvious when $g(\uB, w,
z)$ in \eqref{eq:gCases} is $\infty$. Otherwise, consider two
cases. First, $z \geq \qB^{\downarrow}_1$, here $h(z) = z$ is
smaller than $g(\uB, w, z)$ in~\eqref{eq:gCases} and equality holds when $\uB =
z\One - \qB \geq 0$ and $w = 0$. Second, $z < \qB^{\downarrow}_1$, in this
case equality $\uB = z \One - \qB$ in~\eqref{eq:gCases} cannot arise for
any $\uB \geq 0$ since
$\uB_i < 0$ whenever $\qB_i = \qB^{\downarrow}_1$. The only relevant case
from~\eqref{eq:gCases} is $w \neq 0$ and $\uB \neq z \One - \qB$, i.e.,
\begin{align}
\begin{aligned}
    g(\uB, w, z) &= \frac{1}{4w} || \uB + \qB - z\One||_2^2 + wt + z\\
    &\geq \frac{1}{4w} || \uB^* + \qB - z\One||_2^2 + wt + z \\
    &= \frac{1}{4w} ||\vB(z)||_2^2 + (wt + z) \\
    &\geq \sqrt{t} ||\vB(z)||_2  + z \\
    &= h(z). 
\end{aligned}
    \label{eq:gBounds}
\end{align}
where the first inequality follows from choosing 
\begin{equation}
    \uB^*_i = \max(0, z - \qB_i),
    \label{eq:uBVal}
\end{equation}
the value of $\uB \geq 0$ which minimizes $g(\uB, w,z)$ in the present case;
the second equality in~\eqref{eq:gBounds} follows from the definition of $\vB(z)$
in~\eqref{eq:vDef}, the second inequality in~\eqref{eq:gBounds} follows by
choosing $w$ to be
\begin{equation}
    w^* = \frac{1}{\sqrt{4t}}||\vB(z)||_2,
    \label{eq:wBVal}
\end{equation}
the value of $w > 0$ that minimizes expression on the right side of the second
equality in~\eqref{eq:gBounds}, and the final equality in~\eqref{eq:gBounds}
follows from the definition of $h(z)$ in~\eqref{eq:iDef}. It follows that the
inequalities in~\eqref{eq:gBounds} become equalities when $\uB = \uB^*$, $w =
w^*$.

Notice, $h(z)$ is convex in $z \in \Rbb$ since $h(z)$ can be obtained from the convex
function $g(\uB, w, z)$~(see discussion below~\eqref{eq:gCases}) at special
values of $\uB$ and $w$. 
To minimize $h(z)$ one need not search over all of $\Rbb$ but some interval
$[z_{\min}, z_{\max}]$ of $\Rbb$.
In particular one can show that for some $z_{\min}$ and $z_{\max}$,
$h(z)$ is a decreasing function for $z < z_{\min}$ and an increasing function
for $z> z_{\max}$. One may find $z_{min}$ and $z_{max}$ by computing
\begin{equation}
    h'(z):= \frac{d h(z)}{dz} = \begin{cases}
        1 - \sqrt{t} \frac{|| \vB(z) ||_1}{|| \vB(z) ||_2} & \text{if} \; z < \qB^{\downarrow}_1 \\
        \text{None} & \text{if} \; z =\qB^{\downarrow}_1 \\
        1 & \text{otherwise}
    \end{cases}.
    \label{eq:iDer}
\end{equation}
Since $h'(z) > 0$ for $z > \qB^{\downarrow}_1$, let
\begin{equation}
    z_{\max} = \qB^{\downarrow}_1.
    \label{eq:zMax}
\end{equation}
One can choose 
\begin{equation}
    z_{\min} = \qB^{\downarrow}_n - 2|s|
    \label{eq:zMin}
\end{equation}
where
\begin{equation}
    \frac{1 - \sqrt{t}}{\sqrt{n}(\sqrt{nt} - 1)} ||\qB - \qB^{\downarrow}_n \One ||_1 < s.
    \label{eq:kIneq}
\end{equation}
The choice can be justified by showing $h'(z) < 0$ when $z < z_{\min}$.
Notice, at $z = \qB^{\downarrow}_n - |s|$,
\begin{equation}
    ||\vB(z)||_1 = || \qB - \qB^{\downarrow}_n \One + |s| \One||_1 = 
 || \qB - \qB^{\downarrow}_n||_1 + |s|n 
    \label{eq:vOne}
\end{equation}
where the first equality follows from definition of $\vB(z)$ in~\eqref{eq:vDef}
and value of $z$ chosen above, the second equality uses $\qB -
\qB^{\downarrow}_d \One \geq 0$. Next,
\begin{equation}
    ||\vB(z)||_2 = || \qB - \qB^{\downarrow}_n \One + |s| \One||_2 
    \leq || \qB - \qB^{\downarrow}_n \One ||_2 + |s|\sqrt{n}
    \leq || \qB - \qB^{\downarrow}_n \One ||_1 + |s|\sqrt{n}
    \label{eq:vTwo}
\end{equation}
where the first inequality follows from a triangle inequality, $||\aB + \bB||
\leq ||\aB|| + ||\bB||$, and the second from~\eqref{eq:vecNormIneq}.
Notice, inequality~\eqref{eq:kIneq} can be rearranged as follows,
\begin{equation}
    || \qB - \qB^{\downarrow}_n \One ||_1 + s \sqrt{n} <
        \sqrt{t} (|| \qB - \qB^{\downarrow}_n \One ||_1 + sn ).
\end{equation}
Using~\eqref{eq:vOne} and~\eqref{eq:vTwo} in the above equation, we find
\begin{equation}
    ||\vB||_2 < \sqrt{t} ||\vB||_1.
\end{equation}
The above equation, together with the definition of $h'(z)$~\eqref{eq:iDer}
implies $h'(z)<0$ at $z = \qB^{\downarrow}_n - |s|$.

Let $1/n < t < 1$, $z^* \in \text{argmin} \; h(z)$, and $h^* = h(z^*)$
then~\eqref{eq:iIneqG} implies
\begin{equation}
    h^* = g^*.
\end{equation}
Notice $g^*$ obtained this way gives the optimum value of~\eqref{opt:glnVer}.
One can also obtain the optimizer $\pB^*$ of~\eqref{opt:glnVer} using
$z^*$ and KKT conditions~(see Sec.~\ref{sec:opt}).
Notice, $z_{\min} \leq z^* \leq z_{\max}$. If $z^* = z_{\max} =
\qB^{\downarrow}_1$, then
\begin{equation}
    h^* = \qB^{\downarrow}_1  = g(z_{\max} \One - \qB, 0, z_{\max}),
\end{equation}
here $\uB = z_{\max} \One - \qB$, $w = 0$, and $z = z_{\max}$ satisfy 
dual feasibility~\eqref{eq:kkt4}.
Let $S = \{ i \; | \qB_i = \qB^{\downarrow}_1 \}$ then
\begin{equation}
    \pB^*_i = \begin{cases}
        1/|S| & \text{if} \; \qB_i = \qB^{\downarrow}_1  \\
        0 & \text{otherwise}
    \end{cases}.
    \label{eq:pBStar0}
\end{equation}
Here, $\pB^* \geq 0$ and $\sum_i \pB^*_i = 1$, in addition one may show 
that $\pB^* \cdot \pB^* = 1/|S|  \leq t$. This inequality can be shown as follows,
let $z = \qB^{\downarrow}_1 - \ep$, $\ep > 0$ s.t.
$\qB^{\downarrow}_{|S| + 1} < z < \qB^{\downarrow}_1$~(such an $\ep$ always exists
unless $\qB = \One/n$, in which case $|S| = n$ and thus $1/n \leq t$ by assumption) then
by definition
\begin{equation}
    h^* = \qB^{\downarrow}_1 < h(\qB^{\downarrow}_1 - \ep) = |\ep|(\sqrt{|S|t}
    - 1) + \qB^{\downarrow}_1
\end{equation}
The above inequality implies $0 \leq \sqrt{t|S|} - 1$, which in turn implies
$\frac{1}{|S|} \leq t$. It follows that $\pB^*$ satisfies primal
feasibility~\eqref{eq:kkt3}.
Complementary slackness~\eqref{eq:kkt2}, easily follows since $\pB^*_i \uB^*_i
= \pB_i(\qB^{\downarrow}_{1} \One - \qB_i) = 0$ and $w(t -\pB^* \cdot \pB^*)
= 0$. Finally, stationarity~\eqref{eq:kkt1} $\nabla_{\pB} L(\pB, \uB, w, z) =
0$ at $\pB =  \pB^*$, $w = 0$, and $\uB = \qB^{\downarrow}_{1} \One - \qB$, and
$z^* = \qB^{\downarrow}_1$ can be checked.

Alternatively, if $z^* < z_{\max}$ then 
\begin{equation}
    \pB^* =  \frac{\sqrt{t}}{||\vB(z^*)||_2} \vB(z^*),
    \label{eq:pBStar}
\end{equation}
where $\vB(z)$, defined in~\eqref{eq:vDef}, is a non-zero vector, $\uB =
\uB^*(z^*)$ in~\eqref{eq:uBVal}, $w = w^*(z^*) > 0$ in~\eqref{eq:wBVal} satisfy
the KKT conditions. Dual feasibility~\eqref{eq:kkt4} is easy to check, primal
feasibility~\eqref{eq:kkt3} can be checked using the fact that $h'(z^*) =
0$~\eqref{eq:iDer}, complementary slackness~\eqref{eq:kkt2} and
stationarity~\eqref{eq:kkt1} can be checked directly from the form
of~\eqref{eq:pBStar}.

\begin{lemma}
    The optimizer $\pB^*$ can always be chosen to have purity $t$.
    \label{lem:purityEq}
\end{lemma}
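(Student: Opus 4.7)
The plan is to split the proof according to which of the two formulas from the dual algorithm produced $\pB^*$, and then modify the second case to saturate the purity constraint.

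First, I would observe that if $z^* < z_{\max}$, the optimizer in~\eqref{eq:pBStar} is $\pB^* = \frac{\sqrt{t}}{||\vB(z^*)||_2} \vB(z^*)$, so $\pB^* \cdot \pB^* = t$ by construction and there is nothing to prove. Hence the only nontrivial case is $z^* = z_{\max} = \qB^{\downarrow}_1$, where~\eqref{eq:pBStar0} gives $\pB^*$ uniform on $S = \{i \, : \, \qB_i = \qB^{\downarrow}_1\}$ with purity $1/|S|$, which may be strictly less than $t$.

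The key observation I would use is that the optimum value $\qB^{\downarrow}_1$ is attained by \emph{every} probability vector supported on $S$: if $\supp(\tilde{\pB}) \subseteq S$ and $\One \cdot \tilde{\pB} = 1$, then $\tilde{\pB} \cdot \qB = \qB^{\downarrow}_1 \sum_{i \in S} \tilde{\pB}_i = \qB^{\downarrow}_1$. So any such $\tilde{\pB}$ satisfying the purity constraint is also an optimizer, and I have complete freedom to reshape the mass within $S$. To hit purity exactly $t$, I would introduce the one-parameter family
\begin{equation}
    \tilde{\pB}(a) = a \, \eB(i_0) + \frac{1-a}{|S|-1} \sum_{i \in S \setminus \{i_0\}} \eB(i),
\end{equation}
for some fixed $i_0 \in S$, whose purity $a^2 + (1-a)^2/(|S|-1)$ is continuous in $a$ and runs from $1/|S|$ (at $a = 1/|S|$) up to $1$ (at $a = 1$). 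Invoking the bound $1/|S| \leq t$ already established in the paragraph following~\eqref{eq:pBStar0}, together with $t \leq 1$, the intermediate value theorem supplies some $a \in [1/|S|, 1]$ with $\tilde{\pB}(a) \cdot \tilde{\pB}(a) = t$.

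The only mild obstacle I anticipate is the degenerate case $|S| = 1$, where the family above is empty. But then $1/|S| \leq t$ forces $t = 1$, and the unique vector $\eB(i_0)$ on $S$ already has purity $1 = t$, so the claim holds trivially. For completeness I would also note that the edge cases $t \geq 1$ and $t = 1/n$ are handled by the algebraic branches of the algorithm, where the claim either does not arise or is immediate from the uniform solution $\One/n$.
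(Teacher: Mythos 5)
Your proof is correct and follows essentially the same route as the paper: dismiss the case $z^* < z_{\max}$ where \eqref{eq:pBStar} saturates the purity by construction, then in the case \eqref{eq:pBStar0} move mass within the set $S$ along a one-parameter family of optimizers and invoke the intermediate value theorem together with $1/|S| \leq t \leq 1$. Your family $\tilde{\pB}(a)$ is just a reparametrization of the paper's interpolation $(1-\dl)\pB^* + \dl\,\eB(i)$ with $i \in S$, so the arguments coincide; the paper additionally notes convexity of the purity along the family to enable a binary search for the parameter, which is an algorithmic refinement not needed for the existence claim.
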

We have two different types of optimizers, $\pB^*$ in~\eqref{eq:pBStar0} and
$\pB^*$ in~\eqref{eq:pBStar}. This second optimizer in~\eqref{eq:pBStar} has
purity exactly $t$. The first optimizer
$\pB^*$ in~\eqref{eq:pBStar0} satisfies $\pB^* \cdot \pB^* \leq t$, if the
inequality happens to be an equality the purity of this optimizer is exactly
$t$. Otherwise, let $\pB^*(\dl) = (1-\dl) \pB^* + \dl \eB(i)$ where $i \in S$
is defined above eq.~\eqref{eq:pBStar0}, and
$0 \leq \dl \leq 1$ be a one parameter family of optimizers. These satisfy
$\pB^*(\dl) \cdot \qB = \pB^* \cdot \qB$.  Notice $a(\dl) := \pB^*(\dl) \cdot
\pB^*(\dl) - t$ is a smooth function of $\dl$, $a(0) < 0$, and $a(1) \geq 0$,
thus from the intermediate value theorem, there exists a value $0 < \dl' < 1$
where $a(\dl') = 0$, i.e. $ \pB^*(\dl') \cdot \pB^*(\dl') = t$.  Since $a(\dl)$
is convex in $\dl$, with $a(0) < 0$ and $a(1) \geq 0$ one can perform binary
search over $\dl$ to find $\dl'$ such that $a(\dl') = 0$.

Using insights from solving the Lagrange dual formulation, one may numerically
optimize~\eqref{opt:maxFPhiPur} using Algorithm~\eqref{alg:main2}. In this
algorithm, the first six lines solve obvious bases cases $t \geq 1$, and $t
\leq 1/n$. In lines 9-14, one optimizes the dual function by minimizing $h(z)$,
a convex function, over a bounded interval~(see comments below
eq.~\eqref{eq:minFour} and~\cite{PressTeukolskyEA92, VirtanenGommersEA20}) to
obtain the optimum value of the dual. In the next lines 16-21, the optimizer
for the primal problem is constructed. 

\RestyleAlgo{boxruled}
 \begin{algorithm}[H]
 \DontPrintSemicolon
 \SetAlgoLined
 \SetKwInOut{Input}{Input}\SetKwInOut{Output}{Output}
     \Input{$t \in \Rbb, \qB \in \Rbb^n$ with $\One.\qB = 1$}
     \Output{$h(t, \qB, n) = (\pB^*, \pB^* \cdot \qB )$, $\pB^* \in 
     \underset{\pB \in \FC_t}{\text{argmax}} \; \pB.\qB$}
 \uIf{$t \geq 1$}{
    $\pB^* \in \underset{\pB \in \PC}{\text{argmax}} \; \pB.\qB$ \;
     \Return{($\pB^*, \pB^*.\qB)$} }
\BlankLine
 \uElseIf{$t < 1/n$}{
     \Return{(None, $-\inf$)} }
\BlankLine
 \uElseIf{$t = 1/n$}{
     \Return{$(\One/n, 1/n)$}}
\BlankLine
 \Else{
     $k = (1 - \sqrt{t}) || \qB  - \qB^{\downarrow}_n \One||_1 / (\sqrt{n}(\sqrt{nt} - 1))$ \;
     $z_{\min} = \qB^{\downarrow}_n - 2|s|$ \;
     $z_{\max} = \qB^{\downarrow}_1$ \;
     $\vB_h(z) = \max(0, \qB_i - z)$ \;
     $h(z) = \begin{cases} 
         \sqrt{t} || \vB(z) ||_2 + z & \text{if} \; z < \qB^{\downarrow}_1 \\
         z & \text{otherwise}
     \end{cases}.$ \;
    $z^* \in \underset{z_{\min} \leq z \leq z_{\max}}{\text{argmin}} \; h(z)$
    }
\BlankLine
     \If{$\vB(z^*) = 0$}{
     $S = \{i \; | \qB_i = \qB^{\downarrow}_1 \}$ \;
     $\pB^*_i = \begin{cases} 
         1/|S| & \text{if} \; \qB_i = \qB^{\downarrow}_1 \\
         0 & \text{otherwise}
     \end{cases}$
     }
\BlankLine
     \Else{
    $\pB^* = \sqrt{r} \vB/||\vB||_2$
     }
\BlankLine
     \Return{$( \pB^*, h(z^*) )$}
 \caption{Maximum overlap with bounded purity}
     \label{alg:main2}
\end{algorithm}

\subsection{Numerical results}
\label{sec:numerics}

We compare the performance of algorithms~\eqref{alg:recursive}
and~\eqref{alg:main2} against the CVXOPT~\cite{SraNowozinEA11, AndersenDahlEA21}
solver accessed via the PICOS~\cite{SagnolStahlberg22} interface.
We carry out this comparison for various different dimensions $n$. For each
dimension, we randomly sample 50 values of $\qB$ to define the objective
function and plot the average time to solution.
In Fig.~\ref{fig:timeToSolFit1} we plot the average time to solution taken by
the dual algorithm. The plot best fits a linear curve. This linear curve is
expected since the algorithm takes $O(n)$ time, a fixed time to minimize $h(z)$
and $O(n)$ operations to read $\qB$ and reconstruct the optimizer.
In Fig.~\ref{fig:timeToSolFit2} we plot the average time to solution taken by
the recursive algorithm. These points for the recursive algorithm best fits a
quadratic curve. This is expected from the nature of the recursive algorithm,
in each step of the recursion the algorithm taken $O(n)$ steps, and there are
at most $O(n)$ recursive calls, thus taking a total of $O(n^2)$ time.
Finally data points from CVXOPT plotted in Fig.~\ref{fig:timeToSolFit3} best
fits a cubic curve.
The average time to solution is plotted for all three algorithms jointly on a
logarithmic scale in Fig.~\ref{fig:timeToSol}.
It indicates that the average time to solution for CVXOPT is longer than that
for the recursive algorithm, which in turn takes longer than the recursive
algorithm. For large dimension, $n \simeq 8000$, average time to solution for
CVXOPT is roughly two to three orders of magnitude more than the recursive
algorithm and this recursive algorithm takes one to two order of magnitude more
time than the dual algorithm on average.

\begin{figure}[h]
	\centering
    \includegraphics[width=0.8\textwidth]{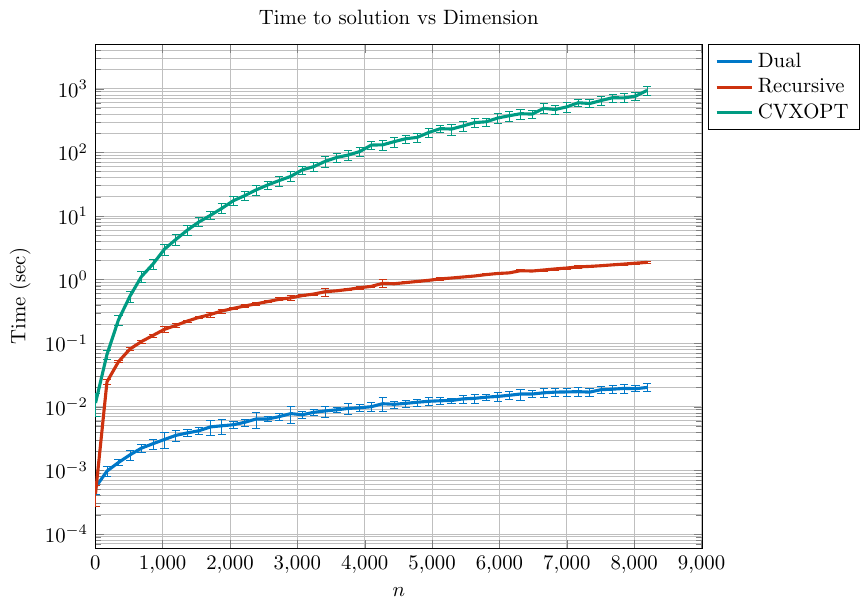}
    \caption{Plot of average time to solution for optimization
    problem~\eqref{opt:glnVer} using various algorithms against the problem
    dimension $n$. For each $n$ the average is taken over 50 random problem
    instances. Each instance is solved using the recursive
    algorithm~\eqref{alg:recursive}, the dual algorithm~\eqref{alg:main2}, and
    CVXOPT.}
    \label{fig:timeToSol}
\end{figure}

\begin{figure}[h]
	\centering
    \includegraphics[width=0.8\textwidth]{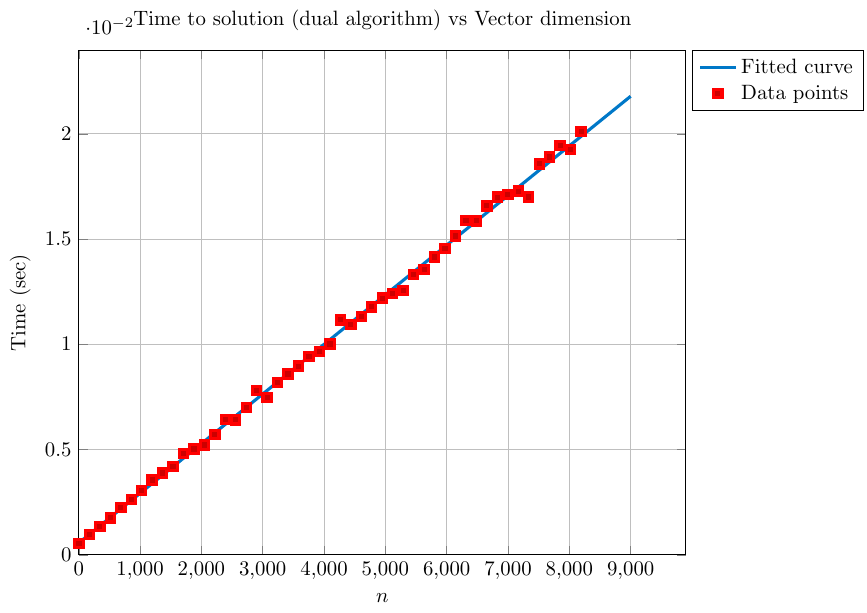}
    \caption{A linear curve fitting the data points for time to solution using
    the dual algorithm~\eqref{alg:main2}}
    \label{fig:timeToSolFit1}
\end{figure}

\begin{figure}[h]
	\centering
    \includegraphics[width=0.8\textwidth]{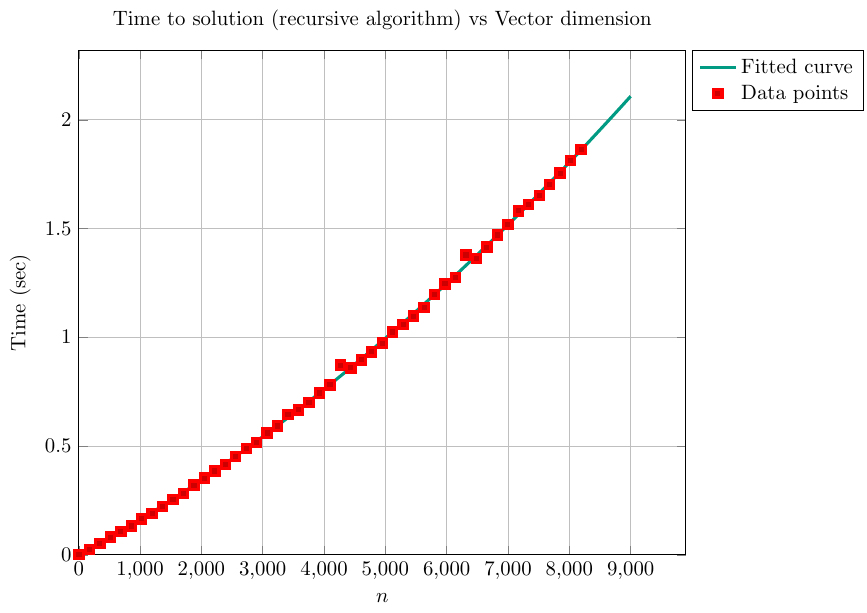}
    \caption{A quadratic curve fitting the data points for time to solution
    using recursive algorithm~\eqref{alg:recursive}.}
    \label{fig:timeToSolFit2}
\end{figure}

\begin{figure}[h]
	\centering
    \includegraphics[width=0.8\textwidth]{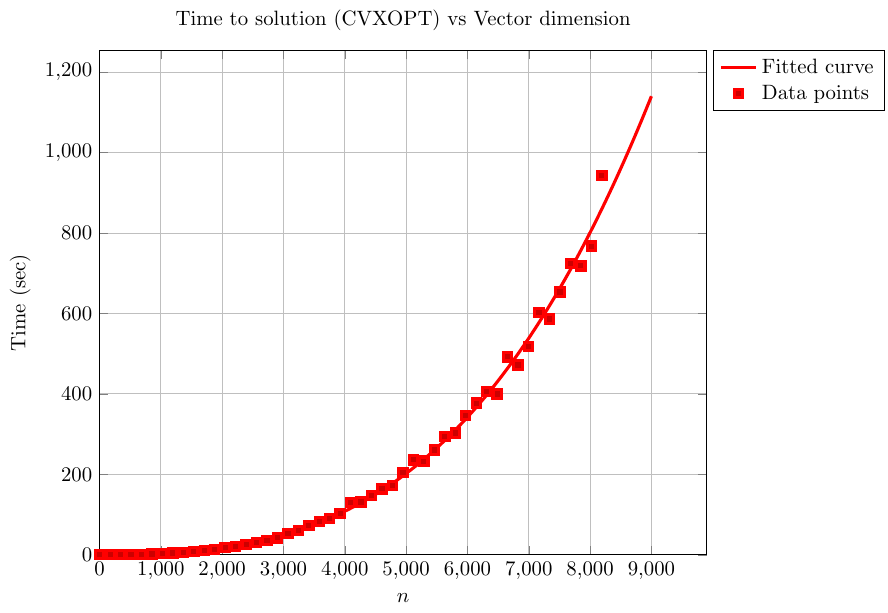}
    \caption{A cubic curve fitting the data points for time to solution using
    CVXOPT.}
    \label{fig:timeToSolFit3}
\end{figure}

\clearpage

\section{Discussion}
\label{sec:discuss}

Evolution of a state due to noise is described using a quantum channel,
equivalently evolution of observables is described using a dual map.
In the presence of such noise we find, in-terms of the dual map, exact
expressions for the maximum fidelity of preparing a target pure state 
and the minimum energy a system can be prepared in~(see Sec.~\ref{sec:noisyVal}).
These results can be proven succinctly~(using an optimization and matrix
analysis lens) and are applicable broadly.
One may rightly expect noise to increase and decrease a system's energy over
time.  We find that no decrease can be observed for the ground-state energy as
the least eigenvalue of observables increases under noisy evolution described by
a quantum channel.  This
mathematical result supports the physical intuition that noise can't help cool
a system's ground-state. It may be interesting to use this result to constrain
other heuristic approaches to study noisy evolution of Hamiltonians.

Description of a noisy quantum channel grows exponentially with system size and
may not be readily available even for smaller systems of interest.  Thus, it is
important to asses the effect of noise without its complete
description. 
To this end, in Sec.~\ref{sec:purLim} we propose including a measurable effect
of noise in one's assessment instead. This effect is a decrease in purity.
Constraining the system to have some maximum possible purity, we show how to
estimate a system's ability to perform some fundamental task. There are
cases where the constraint gives an upper bound on the performance~(see
discussion containing eq.~\eqref{eq:renyConst}) of fundamental tasks under
noise.

The tasks of interest studied in this work include preparing a pure state with
high fidelity, measuring the ground-state energy of a system, and sharing high
fidelity entanglement, each carried out using states with some maximum purity.
In each case the respective metric of performing the task can be found by
maximizing the expectation value of some $d \times d$ observable over states of
bounded purity. 
We map this problem~(at a cost of at most $O(d^3)$ steps incurred when the
eigendecomposition of the observable is not known) from the space of matrices
to those over probability vectors in such a way that the optimization remains
convex~(see Sec.~\ref{sec:reformprob}). This optimization can be solved
using state-of-the-art numerical solvers such as CVXOPT. We improve upon CVXOPT
by proposing two numerical algorithms~(see Sec.~\ref{sec:algos}). While these
algorithms perform the same optimization, they do so in two very different
ways. We include both since one may be easier to extend than the other and they
each help benchmark the other.

When optimizing over $n$-dimensional probability vectors, our first recursive
algorithm~(see Alg.~\ref{alg:recursive}), takes $O(n^2)$ steps, while our
second algorithm~(see Alg.~\ref{alg:main2}) makes use of duality theory in
optimization and takes only $O(n)$ steps. Numerical experiments indicate that
both our algorithms take less time than CVXOPT, which appears to take $O(n^3)$
time. Our numerics also indicate that for larger dimensions, the advantages of
using our algorithms can be substantial, for instance the dual algorithm appears
roughly four orders of magnitude faster than CVXOPT~(see discussion
in Sec.~\ref{sec:numerics}) in our numerics.

Another task of interest for modest system sizes is quantum tomography. Here
again, noise in state preparation and measurement affects the estimate. There
are ways to modify standard estimators to incorporate noise. However, the
modifications can come from statistical assumptions, require hyper-parameter
tuning, or can require knowledge of the full noise channel
itself~\cite{BlumeKohout10, TeoZhuEA11, Siddhu19}. By contrast, our work allows
one to build on a standard estimator, the maximum likelihood estimator~(MLE)
from~\cite{SmolinGambettaEA12a}, while incorporating the effect of noise by
including a non-linear constraint, the system's purity which is measurable~(see
Sec.~\ref{sec:tomo}).
Purity constraints can result in both convex and non-convex optimization
problems. Even though general non-convex optimization cannot be solved
optimally, we show how to use our algorithms to solve both convex and
non-convex versions of these tomographic reconstruction problems.

To solve the convex version of the problem, we also propose a third algorithm
in Sec.~\ref{sec:minD}.  From an optimization theory point of view, this
algorithm projects, under Frobenius norm, onto the set of unit-trace positive
semi-definite operators with some fixed maximum Frobenius norm~(see discussion
below~\eqref{eq:minFour}).  The projection requires a one-parameter
optimization which is straightforward. 

\section{Acknowledgements}
We thank Sergey Bravyi and Charles Bennett for helpful discussions and
Mohammad Alhejji for useful comments on the manuscript.

\printbibliography[heading=bibintoc]

\end{document}